\documentclass[a4paper,USenglish,10pt]{article}
\usepackage[utf8]{inputenc}

\usepackage{amsthm,amssymb,amsfonts,amsmath}
\usepackage{amssymb,amsfonts,amsmath}
\newtheorem{theorem}{Theorem}
\newtheorem{lemma}[theorem]{Lemma}
\newtheorem{corollary}[theorem]{Corollary}

\newtheorem{observation}[theorem]{Observation}
\theoremstyle{definition}

\usepackage[hidelinks]{hyperref}
\usepackage[blocks]{authblk}

\usepackage{graphicx}
\graphicspath{{figures/}}
\usepackage{wrapfig}

\newcommand{\fig}[1]{\figurename~\ref{#1}}

\makeatletter
\g@addto@macro\bfseries{\boldmath}
\makeatother

\advance\hoffset-7mm
\setlength\textwidth{140mm}

\widowpenalty10000
\clubpenalty10000

\usepackage{thm-restate}

\newcommand{\td}{\ensuremath{\textsf{td}}}
\newcommand{\gtd}{\ensuremath{\textsf{gtd}}}

\usepackage{url}

\newcommand{\CH}{\ensuremath{\textsf{CH}}}

\newcommand{\ceil}[1]{\ensuremath{\left \lceil #1 \right \rceil}}
\newcommand{\floor}[1]{\ensuremath{\left \lfloor #1 \right \rfloor}}

\newcommand{\R}{\ensuremath{\mathbb{R}}}

\makeatletter
\g@addto@macro\bfseries{\boldmath}
\makeatother

\usepackage{authblk}

\widowpenalty10000
\clubpenalty10000

\date{}

\title{Extending the centerpoint theorem to multiple points}

\begin{document}
\author[1]{Alexander Pilz\thanks{Supported by a Schr\"odinger fellowship of the Austrian Science Fund (FWF): J-3847-N35.}}
\affil[1]{Institute of Software Technology, Graz University of Technology.\\ \texttt{apilz@ist.tugraz.at}}

\author[2]{Patrick Schnider}
\affil[2]{Department of Computer Science, ETH Z\"urich.\\ \texttt{patrick.schnider@inf.ethz.ch}}

\maketitle

\begin{abstract}
The centerpoint theorem is a well-known and widely used result in discrete geometry.
It states that for any point set $P$ of $n$ points in $\mathbb{R}^d$, there is a point $c$, not necessarily from $P$, such that each halfspace containing $c$ contains at least $\frac{n}{d+1}$ points of $P$.
Such a point $c$ is called a centerpoint, and it can be viewed as a generalization of a median to higher dimensions.
In other words, a centerpoint can be interpreted as a good representative for the point set $P$.
But what if we allow more than one representative?
For example in one-dimensional data sets, often certain quantiles are chosen as representatives instead of the median.

We present a possible extension of the concept of quantiles to higher dimensions.
The idea is to find a set $Q$ of (few) points such that every halfspace that contains one point of $Q$ contains a large fraction of the points of $P$ and every halfspace that contains more of $Q$ contains an even larger fraction of $P$.
This setting is comparable to the well-studied concepts of weak $\varepsilon$-nets and weak $\varepsilon$-approximations, where it is stronger than the former but weaker than the latter.
We show that for any point set of size $n$ in $\mathbb{R}^d$ and for any positive $\alpha_1,\ldots,\alpha_k$ where $\alpha_1\leq\alpha_2\leq\ldots\leq\alpha_k$ and for every $i,j$ with $i+j\leq k+1$ we have that $(d-1)\alpha_k+\alpha_i+\alpha_j\leq 1$, we can find $Q$ of size $k$ such that each halfspace containing $j$ points of $Q$ contains least $\alpha_j n$ points of $P$.
For two-dimensional point sets we further show that for every $\alpha$ and $\beta$ with $\alpha\leq\beta$ and $\alpha+\beta\leq\frac{2}{3}$ we can find $Q$ with $|Q|=3$ such that each halfplane containing one point of $Q$ contains at least $\alpha n$ of the points of $P$ and each halfplane containing all of $Q$ contains at least $\beta n$ points of $P$.
All these results generalize to the setting where $P$ is any mass distribution.
For the case where $P$ is a point set in $\mathbb{R}^2$ and $|Q|=2$, we provide algorithms to find such points in time $O(n\log^3 n)$.
\end{abstract}

\section{Introduction}
Medians and quantiles are ubiquitous in the statistical analysis and visualization of data.
These notions allow for quantifying how deep some point lies within a one-dimensional data set by measuring how many elements of the data set appear before the point and how many appear after it.
In comparison to the mean, medians and quantiles have the advantage that they only depend on the order of the data points, and not their exact positions, making them robust against outliers.
However, in many applications, data sets are multidimensional, and there is no clear order of the data set.
For this reason, various generalizations of medians to higher dimensions have been introduced and studied.
Many of these generalized medians rely on a notion of depth of a query point within a data set, a median then being a query point with the highest depth among all possible query points.
Several such depth measures have been introduced over time, most famously Tukey depth~\cite{tukey} (also called halfspace depth), simplicial depth, or convex hull peeling depth (see, e.g.,~\cite{aloupis}).
All of these depth measures lead to generalized medians that are invariant under affine transformations.
As for quantiles, only a few generalizations have been introduced (see, e.g., \cite{chaudhuri}).
We propose such a generalization by extending a depth measure to sets with a fixed number of query points and defining a quantile as a set with maximal depth.
The depth measure we extend is Tukey depth: the \emph{Tukey depth} of a point $q$ with respect to a point set $P \subset \R^d$ is the minimal number of points of $P$ in any closed halfspace containing~$q$.
More formally, if $H$ denotes the set of closed halfspaces, then the Tukey depth $\td_P(q)$ of~$q$ with respect to $P$ is
\[
\td_P(q)=\min_{q \in h \in H}\{|h\cap P|\} \enspace .
\]
Similarly, the Tukey depth can also be defined for a mass distribution~$\mu$:
\[
\td_{\mu}(q)=\min_{q \in h \in H}\{\mu(h)\} \enspace .
\]

Here, a \emph{mass distribution} $\mu$ on $\mathbb{R}^d$ is a measure on $\mathbb{R}^d$ such that all open subsets of $\mathbb{R}^d$ are measurable, $0 < \mu(\mathbb{R}^d) < \infty$ and $\mu(S)=0$ for every lower-dimensional subset $S$ of $\mathbb{R}^d$.

The centerpoint theorem states that there is always a point of high depth, i.e., a point $q$ such that for every closed halfspace~$h$ containing $q$ we have $|h\cap P|\geq\frac{|P|}{d+1}$ (or $\mu(h)\geq\frac{\mu(\mathbb{R}^d)}{d+1}$ for masses).
Note that, for $d=1$, such a centerpoint is a median: a median has the property that every halfline containing it contains at least half of the underlying data set.
Quantiles can be interpreted similarly: the $\frac{1}{3}$-quantile and the $\frac{2}{3}$-quantile form a set of two points such that every halfline that contains one of them contains at least $\frac{1}{3}$ of the data set.
Furthermore, a halfline containing both of the points contains at least $\frac{2}{3}$ of the underlying data set.
In particular, halflines containing more points contain more of the data set.
This idea leads to the following generalization of Tukey depth for a set $Q$ of multiple points:
\[
\gtd_P(Q):=\min_{h \in H \colon Q\cap h\neq\emptyset}\left\{\frac{|h\cap P|}{|h\cap Q|}\right\} \enspace .
\]
Again, we can generalize this to mass distributions:
\[
\gtd_{\mu}(Q):=\min_{h \in H \colon Q\cap h\neq\emptyset}\left\{\frac{\mu(h)}{|h\cap Q|}\right\} \enspace .
\]

We prove that there is always a set $Q$ of $k$ points that has generalized Tukey depth $\frac{1}{kd+1}$.
In fact, we prove the following, more general statement:

\begin{restatable}{theorem}{kPointsGeneral}\label{thm:k_points_2d}
{
Let $\mu$ be a mass distribution in $\mathbb{R}^d$ with $\mu(\mathbb{R}^d)=1$.
Let $\alpha_1,\ldots,\alpha_k$ be non-negative real numbers such that $\alpha_1\leq\alpha_2\leq\ldots\leq\alpha_k$ and for every $i,j$ with $i+j\leq k+1$ we have that $(d-1)\alpha_k+\alpha_i+\alpha_j\leq 1$.
Then there are $k$ points $p_1,\ldots, p_k$ in $\mathbb{R}^d$ such that for each closed halfspace~$h$ containing $j$ of the points $p_1,\ldots, p_k$ we have $\mu(h)\geq\alpha_j$.
}
\end{restatable}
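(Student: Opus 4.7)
My plan is to reduce the problem to a one-dimensional quantile construction, using the remaining $d-1$ degrees of freedom to handle the tilted-halfspace cases via a centerpoint argument in dimension $d-1$. After a standard smoothing argument, I would assume that $\mu$ is absolutely continuous with connected support, so that all quantiles are uniquely defined; the general statement follows by a limit argument.

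Fix a generic unit direction $v$ and let $P$ be the hyperplane through the origin orthogonal to $v$. Let $\nu$ be the pushforward of $\mu$ onto $P$ under projection along $v$, and apply the centerpoint theorem in $\R^{d-1}$ to obtain a point $c^* \in P$ of $\nu$-depth at least $\tfrac{1}{d}$. Let $\ell$ be the line through $c^*$ in direction $v$, and let $\rho$ be the pushforward of $\mu$ onto $\ell$. Place $p_j$ at the point of $\ell$ whose $v$-coordinate $t_j$ satisfies $\rho\bigl((-\infty,t_j]\bigr)=\alpha_j$; the positions $t_1<\dots<t_k$ are consistent because the special case $i+j=k+1$ of the hypothesis yields $\alpha_i+\alpha_{k-i+1}\leq 1$.

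Because the points lie on a common line, any closed halfspace $h$ containing exactly $j$ of them contains either the ``bottom'' prefix $\{p_1,\dots,p_j\}$ or the ``top'' suffix $\{p_{k-j+1},\dots,p_k\}$ in the order along $\ell$. When the bounding hyperplane $H$ of $h$ is orthogonal to $v$, the mass $\mu(h)$ equals a one-dimensional tail of $\rho$, and the quantile construction immediately gives $\mu(h)\geq\alpha_j$. The delicate sub-case is when $H$ is tilted: here my goal is to bound the mass shifted across $H$ by the tilt by expressing $\mu(h)$ as the perpendicular tail plus a $(d-1)$-dimensional correction controlled by $\nu$. The centerpoint property of $c^*$ should ensure that any $(d-2)$-flat tilt axis in $P$ splits $\nu$ into pieces of mass at least $\tfrac{1}{d}$, so that the net mass exchanged across the tilt is bounded; summing over the $d-1$ independent tilt degrees of freedom, the worst-case loss on the prefix side is at most $(d-1)\alpha_k$, which is precisely the slack offered by the hypothesis $(d-1)\alpha_k+\alpha_i+\alpha_j\leq 1$. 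Making this tilt-exchange estimate rigorous is the main obstacle I anticipate; I would expect it to require a careful integration argument along slabs parallel to $v$, or possibly an iterative application of Helly's theorem across the $d-1$ orthogonal tilt directions in $P$, together with a symmetrization that trades excess mass on the suffix side for the deficit on the prefix side.
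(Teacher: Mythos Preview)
Your proposal has a genuine gap: the collinear placement of $p_1,\dots,p_k$ along a line through a projected centerpoint does not work in general, and the tilt-exchange estimate you flag as the main obstacle is in fact false, not merely hard to make rigorous.

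Take $d=2$, $k=2$, $\alpha_1=\tfrac15$, $\alpha_2=\tfrac25$, and let $\mu$ be a slight smoothing of the measure that puts mass $\tfrac35$ uniformly on the segment $[-1,1]\times\{0\}$ and mass $\tfrac25$ at $(0,100)$. With $v=(0,1)$, the projection $\nu$ onto the $x$-axis has its median at $0$, so $\ell$ is the $y$-axis. The pushforward $\rho$ onto $\ell$ has mass $\tfrac35$ near $y=0$ and mass $\tfrac25$ near $y=100$; since $\tfrac15<\tfrac25<\tfrac35$, both quantiles $t_1,t_2$ lie in the lower bump, so $p_1$ and $p_2$ are both within $\varepsilon$ of the origin. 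Now the closed halfplane $h=\{y\le Mx+2\varepsilon\}$ with $M$ large contains $p_1$ and $p_2$, misses $(0,100)$, and meets the lower segment essentially only in $[0,1]\times\{0\}$, so $\mu(h)\approx\tfrac{3}{10}<\alpha_2$. Thus your construction violates condition~(2). Note that the centerpoint depth $\tfrac1d$ of $c^*$ is a statement about $\nu$ and carries no information about $\alpha_k$; there is no mechanism by which it could bound the mass lost under tilting by $(d-1)\alpha_k$.

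The paper's proof is structurally different and does not force the $p_i$ onto a line. It recasts the conclusion as: every open halfspace $h$ with $\mu(h)>1-\alpha_{k-j+1}$ must contain at least $j$ of the points. It then classifies such halfspaces by the direction of their inward normal projected onto a fixed $2$-plane, assigning a $j$-halfspace to exactly $j$ consecutive sectors among $k$ equally spaced ones. Helly's theorem in $\R^d$, combined with the elementary fact that $d+1$ halfspaces with empty common intersection have total $\mu$-mass at most $d$, shows that the halfspaces in each sector share a common point; placing $p_i$ there finishes the argument. This angular partition of halfspaces by normal direction is the key device your approach is missing.
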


Note that, for $d=1$ and $k=2$, the points $p_1$ and $p_2$ correspond to the $\alpha_1$-quantile and the $(1-\alpha_1)$-quantile; for $\alpha_j=\frac{j}{kd+1}$ we get our bound on the generalized Tukey depth, and for $\alpha_1=\ldots=\alpha_k$, the result implies the centerpoint theorem.

Our second result is motivated by interpreting the $\frac{1}{3}$-quantile and the $\frac{2}{3}$-quantile not as two points, but as a one-dimensional simplex.
We then have that every halfline that contains a part of the simplex contains at least $\frac{1}{3}$ of the underlying data set and every halfline that contains the whole simplex contains at least $\frac{2}{3}$ of the underlying data set.
Also for this interpretation we give a generalization to two dimensions:

\begin{restatable}{theorem}{SimplexGeneral}\label{thm:simplex_general}
Let $\mu$ be a mass distribution in $\mathbb{R}^2$ with $\mu(\mathbb{R}^2)=1$.
Let $\alpha$ and $\beta$ be real numbers such that $0 < \alpha\leq\beta$ and $\alpha+\beta=\frac{2}{3}$.
Then there is a triangle~$\Delta$ in $\mathbb{R}^2$ such that
\begin{enumerate}
\item[(1)] for each closed halfplane $h$ containing one of the vertices of $\Delta$ we have $\mu(h)\geq\alpha$ and 
\item[(2)] for each closed halfplane $h$ fully containing $\Delta$ we have $\mu(h)\geq\beta$.
\end{enumerate}
\end{restatable}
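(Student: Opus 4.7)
The plan is to recast the two conditions as placement constraints on the three vertices relative to directional level lines of $\mu$, and then construct $\Delta$ via a three-fan equipartition of the plane.

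For each $\theta\in S^1$, define $a(\theta)$ and $b(\theta)$ by $\mu(\{x:x\cdot\theta\ge a(\theta)\})=\alpha$ and $\mu(\{x:x\cdot\theta\ge b(\theta)\})=1-\beta$. Since $\alpha+\beta=\tfrac{2}{3}$ we have $\alpha<1-\beta$, so $a(\theta)>b(\theta)$ and the slab $\{b(\theta)\le x\cdot\theta<a(\theta)\}$ has mass exactly $\tfrac{1}{3}$ in every direction. Condition~(1) is equivalent to each vertex lying in the convex depth region $D_\alpha=\{p:p\cdot\theta\le a(\theta)\ \forall\theta\}=\{p:\td_\mu(p)\ge\alpha\}$, which is nonempty since $\alpha\le\tfrac{1}{3}$. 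Condition~(2) is equivalent to $\max_i p_i\cdot\theta\ge b(\theta)$ for every $\theta$; equivalently, for every direction at least one vertex lies in the halfplane $U_\theta=\{x:x\cdot\theta\ge b(\theta)\}$ of mass $1-\beta$.

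That the critical slab has mass exactly $\tfrac{1}{3}$ --- the reciprocal of the number of vertices --- strongly suggests using a three-fan equipartition: by a Borsuk--Ulam argument on the three-parameter space (apex position in $\R^2$ plus rotation angle in $S^1$), there exist a point $c\in\R^2$ and three rays from $c$ at $120^\circ$ apart dividing the plane into cones $C_1,C_2,C_3$ each of mass $\tfrac{1}{3}$. I would assign to each vertex $p_i$ a responsibility arc of width $120^\circ$ in $S^1$ associated to $C_i$, and place $p_i$ on the boundary of $D_\alpha$ at a position such that $p_i\in U_\theta$ for every $\theta$ in its arc. A natural candidate is $p_i$ on the bisector of cone $C_i$ at the unique distance from $c$ with $p_i\in\partial D_\alpha$; condition~(1) then holds by construction.

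For condition~(2), fix $\theta$ in $p_i$'s arc. The halfplane ``behind $p_i$'' in direction $\theta$, namely $\{x:x\cdot\theta<p_i\cdot\theta\}$, contains the apex $c$ and the bulk of the two adjacent cones $C_{i-1},C_{i+1}$ (total mass $\tfrac{2}{3}$), minus small tails of these cones that protrude past the line through $p_i$ normal to $\theta$; these tails should collectively contribute at most $\alpha$ to the complementary halfplane, owing to the depth-$\alpha$ constraint on $p_i$. Consequently, the behind halfplane has mass at least $\tfrac{2}{3}-\alpha=\beta$, so the beyond halfplane has mass at most $1-\beta$, giving $p_i\cdot\theta\ge b(\theta)$ as required. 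The main obstacle is to verify rigorously that these tails are bounded as claimed across the full $120^\circ$ arc of $\theta$, which requires a careful geometric argument tying the three-fan geometry (anchored at $c$) to the depth-$\alpha$ level line (passing through $p_i$), and is exactly where the tightness $\alpha+\beta=\tfrac{2}{3}$ and the $120^\circ$ equiangular structure of the fan jointly enter the proof.
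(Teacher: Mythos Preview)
Your approach is genuinely different from the paper's. The paper argues Theorem~\ref{thm:simplex_general} ``analogously'' to Theorem~\ref{thm:three_points_plane}: one takes $C$ to be the $\alpha$-depth region, lets $H$ be the family of inclusion-minimal sets $c_i=h_i\cap C$ where each $h_i$ is a closed halfplane of mass $\ge 1-\beta$, proves by a mass-counting case analysis that among any three members of $H$ two must intersect, and then pierces $H$ with three points using a one-dimensional Helly argument along the boundary of $C$. Your proposal instead tries to manufacture the three vertices from an equiangular three-fan equipartition. This is an interesting idea, and the observation that the critical slab has mass $\tfrac13$ is suggestive, but as written the argument has two real gaps.

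First, the placement of $p_i$ may be undefined. You put $p_i$ on the bisector of $C_i$ at the point where it meets $\partial D_\alpha$, implicitly assuming that the fan apex $c$ lies in $D_\alpha$. But the apex of an equiangular equal-mass three-fan need not have Tukey depth $\ge\alpha$; for instance, with mass $\tfrac13$ concentrated near angle $0^\circ$, $\tfrac13$ near $185^\circ$, and $\tfrac13$ spread over $[240^\circ,360^\circ]$ (all at large radius from the origin), the origin is a valid apex with rays at $0^\circ,120^\circ,240^\circ$, yet the line at angle $5^\circ$ through the origin has almost no mass on one side. So $c\notin D_\alpha$ is possible, and then a bisector ray from $c$ may miss $D_\alpha$ entirely.

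Second, and more seriously, the bound ``tails $\le\alpha$'' does not follow from the depth-$\alpha$ constraint on $p_i$. That constraint says every closed halfplane \emph{containing} $p_i$ has mass at least $\alpha$; in particular the ``beyond'' halfplane $\{x\cdot\theta\ge p_i\cdot\theta\}$ has mass $\ge\alpha$, which is a \emph{lower} bound on the beyond side, not an upper bound on its intersection with $C_{i-1}\cup C_{i+1}$. Being on $\partial D_\alpha$ only guarantees the existence of \emph{one} direction where the beyond mass equals $\alpha$, and there is no reason that direction coincides with (or controls) all $\theta$ in the assigned $120^\circ$ arc. Without an argument tying the fan geometry to every direction in the arc, the inequality $\mu(\{x\cdot\theta<p_i\cdot\theta\})\ge\beta$ is simply unproved. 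This is precisely the ``main obstacle'' you flag, and it is the heart of the matter; the paper's Helly-type route sidesteps it entirely by never fixing the points in advance and instead proving an intersection pattern among the constraint regions.
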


Note that this again generalizes centerpoints for $\alpha=\beta$.
However, this result does not give bounds on the generalized Tukey depth of these sets, as, e.g., a halfspace containing two points may still only contain an $\alpha$-fraction of the mass.

Finally, we give algorithms to compute two points satisfying the two-dimensional case of Theorem~\ref{thm:k_points_2d} and three points satisfying Theorem~\ref{thm:simplex_general} in time $O(n\log^3 n)$.

\subparagraph*{Related work}
Another way to view our setting is the following: given a multidimensional data set, we want to find a fixed number of representatives.
The idea of small point sets representing a larger point set has been studied in many different settings.
One of the most famous of those is the concept of $\varepsilon$-nets, introduced by Haussler and Welzl~\cite{haussler_welzl}.
For a range space $(X,R)$, consisting of a set $X$ and a set $R$ of subsets of $X$, an $\varepsilon$-net on $P\subset X$ is a subset $N$ of $P$ with the property that every $r\in R$ with $|r\cap P|\geq \varepsilon |P|$ intersects $N$.
In our setting, where we consider halfspaces, we would choose $X=\mathbb{R}^d$ and $R$ as the set of all halfspaces.
It is known that for this range space, for any point set $P$ there exists an $\varepsilon$-net of size $O(\frac{d}{\varepsilon}\log\frac{d}{\varepsilon})$.
In particular, this bound does not depend on the size of $P$.
Note that we require the $\varepsilon$-net to be a subset of $P$.
If this condition is dropped, we arrive at the concept of \emph{weak} $\varepsilon$-nets.
The fact that the points for the weak $\varepsilon$-net can be chosen anywhere in $\mathbb{R}^d$ allows for very small weak $\varepsilon$-nets for many range spaces.
There has been some work on weak $\varepsilon$-nets of small size.
For halfplanes in $\mathbb{R}^2$ for example, Aronov et al.~\cite{Aronov} have shown that there is always a weak $\frac{1}{2}$-net of two points.
These two points both lie outside of the convex hull of $P$.
They also consider many other range spaces, such as convex sets, disks and rectangles.
Similarly, Babazadeh and Zarrabi-Zadeh \cite{small_weak_3d} construct weak $\frac{1}{2}$-nets of size 3 for halfspaces in $\mathbb{R}^3$.
For two-dimensional convex sets, Mustafa and Ray~\cite{Mustafa} have shown that there is always a weak $\frac{4}{7}$-net of two points; Shabbir~\cite{shabbir} shows how to find two such points in $O(n \log^4 n)$ time.

Another related concept is the concept of $\varepsilon$-approximations:
For a range space $(X,R)$ an $\varepsilon$-approximation on $P\subset X$ is a subset $N$ of $P$ with the property that for every $r\in R$ we have $\left| \frac{|r\cap P|}{|P|}-\frac{|r\cap N|}{|N|}\right|\leq\varepsilon$.
Again, the restriction that $N$ has to be a subset of $P$ can be dropped to get the notion of weak $\varepsilon$-approximations.
Just as for $\varepsilon$-nets, there has been a lot of work on $\varepsilon$-approximations and weak $\varepsilon$-approximations, see \cite{handbook} for a recent survey.
In particular it was shown that for halfspaces in $\mathbb{R}^d$, there always is an $\varepsilon$-approximation of size $O(\frac{1}{\varepsilon^{2-2/(d+1)}})$ \cite{Matousek, MatWelzl}.

While our setting can be considered to be related to weak $\varepsilon$-nets and weak $\varepsilon${-\penalty0\hskip0pt\relax}approximations for range spaces determined by halfspaces, the differences are significant.
As we will discuss here, a halfspace missing all the points of $Q$ may still contain up to half of the points of the initial set, and thus $Q$ qualifies neither as a good weak $\varepsilon$-approximation nor $\varepsilon$-net.

Note that for $|Q|=2$, the condition of Theorem~\ref{thm:k_points_2d} that any halfspace containing all of the points of $Q$ contains at least $\alpha_2 n$ points of $P$ is equivalent to the statement that every halfspace containing more than $(1-\alpha_2)n$ of the points of $P$ contains at least one point of $Q$.
So, $Q$ is a weak $(1-\alpha_2)$-net of $P$.
Furthermore, the condition that any halfspace containing one of the points of $Q$ contains at least $\alpha_1 n$ points of $P$ translates to the statement that every halfspace containing more than $(1-\alpha_1)n$ of the points of $P$ must contain all of $Q$.
Thus, $Q$ is not only a weak $(1-\alpha_2)$-net of $P$ but also has the additional property that all points of $Q$ are somewhat deep within~$P$.
(For two points in the plane, this comes at the cost of having $\varepsilon$ larger than $\frac12$.)
On the other hand, while we require halfspaces containing all points of $Q$ to also contain many points of $P$, we also allow halfspaces containing only one point of $Q$ to contain many points of~$P$.
This separates our concept from weak $\varepsilon$-approximations.
Note that when dealing with halfspaces and $\varepsilon$-nets of size~2, the weak $\frac{1}{2}$-net of Aronov et al.~\cite{Aronov} is actually also a weak $\frac{1}{2}$-approximation.
Similarly, Theorem~\ref{thm:k_points_2d} gives us a weak $(1-\alpha_2)$-approximation of~$P$, with the optimal value being reached when $\alpha_1$ tends to zero (which actually corresponds to the result in~\cite{Aronov}).
Adding more points to $Q$ does not give us a better approximation.
For $d=2$, requiring that for $i+j \leq k+1$ we have $(d-1)\alpha_i + \alpha_j + \alpha_k < 1$ implies $\alpha_1 + 2\alpha_k < 1$, so a halfspace containing no points of $Q$ may contain half of the points of $P$;
we therefore cannot get anything better than a weak $\frac{1}{2}$-approximation.
Also, we do not get anything better than a weak $\frac{1}{2}$-net.

In fact, our setting is very much related to the concept of  one-sided $\varepsilon$-approximants, recently introduced by Bukh and Nivasch~\cite{approximants}:
For a range space $(X,R)$, a \emph{one-sided $\varepsilon$-approximant} on $P\subset X$ is a subset $N$ of $P$ with the property that for every $r\in R$ we have $\frac{|r\cap P|}{|P|}-\frac{|r\cap N|}{|N|}\leq\varepsilon$.
Once again, the restriction that $N$ has to be a subset of $P$ can be dropped to get the notion of weak one-sided $\varepsilon$-approximations.
Note that the only difference to the definition of $\varepsilon$-approximations is that one does not take the absolute value of the difference.
In particular, if $\frac{|r\cap N|}{|N|}>\frac{|r\cap P|}{|P|}$, i.e., if $r$ contains many points of $N$ despite containing only few points of $P$, the difference is negative, and thus smaller than $\varepsilon$.

In their paper, Bukh and Nivasch~\cite{approximants} consider the range space of convex sets.
They show that any point set in $\mathbb{R}^d$ allows a one-sided $\varepsilon$-approximant for convex ranges of size $g(\varepsilon,d)$, where $g(\varepsilon,d)$ only depends on $\varepsilon$ and $d$, but not on the size of $P$.

In a similar reasoning, it makes sense to define an approximation by a set $N$ such that for every $r\in R$ we have $\frac{|r\cap N|}{|N|}-\frac{|r\cap P|}{|P|}\leq\varepsilon$.
Intuitively, if a range $r$ contains a large fraction of the points of $N$, then it is guaranteed to contain a large fraction of the set $P$ we want to approximate.
But here again, our approximation ratio is $\frac{1}{2}$ at best.

\section{Two points}
\label{sec:2points}

We first consider the case where the underlying data is a point set. Motivated by the definition of generalized Tukey depth, we consider $\alpha_1=\frac{1}{5}$ and $\alpha_2=\frac{2}{5}$.
Even though this result is a special case of Theorem \ref{thm:k_points_2d}, we still show its proof for two reasons: first, the Algorithm presented in Section \ref{sec:algorithms} relies heavily on the presented proof and, secondly, the proof already illustrates the main ideas for the proof of Theorem \ref{thm:k_points_2d}.

\begin{theorem}\label{thm:two_points_plane}
Let $P$ be a set of $n$ points in general position in the plane.
Then there are two points $p_1$ and $p_2$ in $\mathbb{R}^2$ such that
\begin{enumerate}
\item[(1)] each closed halfplane containing one of the points $p_1$ and $p_2$ contains at least $\frac{n}{5}$ of the points of $P$ and 
\item[(2)] each closed halfplane containing both $p_1$ and $p_2$ contains at least $\frac{2n}{5}$ of the points of $P$.
\end{enumerate}
\end{theorem}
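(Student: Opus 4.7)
My plan is to realize the theorem as a combination of a depth condition (condition (1)) and a Helly-type piercing condition (condition (2)), then pick the pair via a bisecting-line argument.

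First, I would reformulate both conditions. Condition (1) is equivalent to $p_1,p_2 \in D_{n/5}$, where $D_k := \{q \in \mathbb{R}^2 : \td_P(q) \geq k\}$ is the convex Tukey-depth-$k$ region; this is nonempty for $k \leq \lceil n/3 \rceil$ by the classical centerpoint theorem, so in particular for $k = n/5$. Condition (2) is equivalent to requiring that every halfplane $g$ with $|g \cap P| > 3n/5$ contains at least one of $p_1,p_2$: indeed, the contrapositive of (2) says that every halfplane with fewer than $2n/5$ points of $P$ misses at least one of $p_1,p_2$, which by complementation is the piercing statement. So the goal reduces to finding a two-point piercing of the family $\mathcal{G} := \{g : |g \cap P| > 3n/5\}$ whose two points both lie in $D_{n/5}$.

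To construct this piercing, I would pick a line $\ell$ that is a halving line of $P$ (with $\lfloor n/2 \rfloor$ points strictly on each side, obtained by a 1D median in the chosen direction) and split $\mathcal{G}$ into two subfamilies $\mathcal{G}^+$ and $\mathcal{G}^-$ according to which closed halfplane bounded by $\ell$ contains the majority of $g \cap P$. Since each $g \in \mathcal{G}$ has more than $3n/5$ points and $\ell$ splits $P$ into halves of at most $n/2$, this majority is well defined; moreover, each $g \in \mathcal{G}^\pm$ necessarily contains a large chunk of the $P$-points on the corresponding side of $\ell$. I would then aim to place $p_1$ at a common point of $\bigcap \mathcal{G}^+$ together with the halfplanes defining $D_{n/5}$, and symmetrically $p_2$ for $\mathcal{G}^-$; by a Helly-type argument, each such intersection should be nonempty.

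The main obstacle will be verifying the 2D-Helly (triple-intersection) property for each enlarged subfamily, and in particular showing that three halfplanes in $\mathcal{G}^+$ intersect in a point that still lies in $D_{n/5}$. For three halfplanes $g_1,g_2,g_3 \in \mathcal{G}^+$, each complement has fewer than $2n/5$ points of $P$, all weighted toward the side of $\ell$ opposite to the majority; a careful counting argument should show that these three complements cannot cover the upper half of $P$, yielding a common $P$-point in all three $g_i$'s. The extra $n/5$-depth constraints on $p_1$ add further halfplanes to the intersection, which all satisfy their own Helly property on the "deep" side of $\ell$. I expect this extended Helly analysis — simultaneously handling the $>3n/5$ condition from $\mathcal{G}^\pm$ and the $n/5$-depth constraints defining $D_{n/5}$ — to be the central technical step, mirroring and strengthening the Helly proof of the classical centerpoint theorem.
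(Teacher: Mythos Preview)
Your plan has the right skeleton and in fact matches the paper's approach: rewrite (1) as $p_1,p_2\in C:=D_{n/5}$, rewrite (2) as a two-point piercing of the family $\mathcal G$ of open halfplanes with more than $\tfrac{3n}{5}$ points, split $\mathcal G$ into two subfamilies, and apply Helly to each subfamily together with~$C$. Note that of the four Helly cases (three sets chosen from one subfamily and from the halfplanes defining $C$), the cases with at most two halfplanes from $\mathcal G$ follow by pure counting: if $g_1,g_2\in\mathcal G$ and $h\supset C$ had empty intersection, their complements would cover $P$ but contain fewer than $\tfrac{2n}{5}+\tfrac{2n}{5}+\tfrac{n}{5}=n$ points. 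So the only nontrivial case, and the place where the split matters, is to show that any \emph{three} halfplanes in the same subfamily intersect.

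Here your proposal diverges from the paper and runs into a real gap. The paper splits $\mathcal G$ by the \emph{direction of the inward normal} (one subfamily $L$ for normals in one half-circle, $R$ for the other). This makes the triple case immediate: three halfplanes whose inward normals lie in a common half-circle cannot have those normals positively spanning $\mathbb R^2$, so their intersection is nonempty (the paper phrases this as all of $L$ sharing a point on the boundary of $\mathrm{conv}(P)$). Your split, by which side of a halving line $\ell$ carries the majority of $g\cap P$, places no constraint on normal directions, so three members of $\mathcal G^+$ can have normals spanning the plane. The counting you sketch does not close this: from $|g_i\cap P|>\tfrac{3n}{5}$ and the majority condition you only get $|g_i^c\cap P^+|<\tfrac{n}{5}$, hence $\sum_i |g_i^c\cap P^+|<\tfrac{3n}{5}$, while the complements covering $P^+$ only forces this sum to be at least $|P^+|=\tfrac{n}{2}$ --- no contradiction. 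So the ``central technical step'' you flag is not merely technical; with your partition it does not go through as stated, and you would need a genuinely different argument. Switching to the normal-direction split resolves this instantly, after which the proof is essentially the paper's.
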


\begin{proof}
Note that condition (1) is equivalent to the condition that every open halfplane containing more than $\frac{4n}{5}$ of the points of $P$ must contain both $p_1$ and $p_2$.
Similarly, condition~(2) is equivalent to the condition that every open halfplane containing more than $\frac{3n}{5}$ of the points of $P$ must contain one of $p_1$ and $p_2$.
We will now construct two points $p_1$ and $p_2$ satisfying both these conditions.

Let $C$ be the intersection of all open halfplanes containing more than $\frac{4n}{5}$ of the points of~$P$.
Clearly $C$ is convex.
Also, note that $C$ is closed.
The centerpoint region is a strict subset of $C$ and thus $C$ has a non-empty interior.
In order to satisfy condition (1), both $p_1$ and $p_2$ have to be placed in $C$.

Let now $H$ be the set of all open halfplanes containing more than $\frac{3n}{5}$ of the points of $P$.
For any $h_i$ in $H$ let $c_i$ be the intersection of $h_i$ and $C$.
In order to also satisfy condition (2), we need to find two points $p_1$ and $p_2$ such that every $c_i$ contains at least one of them.
To this end, we partition $H$ into two subsets $L$ and $R$.
The set $L$ contains all halfplanes that lie on the left side of their respective boundary lines.
Analogously, $R$ contains all halfplanes that lie on the right side of their respective boundary lines.
For a halfplane $h_i$ that has a horizontal boundary line, we put $h_i$ in $L$ if and only if it lies above its boundary line.

Note that any three halfplanes in $L$ have a non-empty intersection:
Consider the inclusion-minimal halfplane $h \in L$ with horizontal boundary line and its intersection~$r$ with the boundary of the convex hull of~$P$.
As $h$ is open, $r$ is not in $h$.
However, we claim that any point $r'$ in $h$ on the convex hull boundary of $P$ in an $\varepsilon$-neighborhood of $r$ is in any halfplane of $L$.
Indeed, if there was a halfplane in~$L$ not containing $r'$, it would contain a strict subset of the intersection of the convex hull of $P$ with~$h$;
however, this would contradict the minimality of~$h$.
The analogous holds for $R$.

We will now show that for any two halfplanes $h_1$ and $h_2$ in $L$, their corresponding regions $c_1$ and $c_2$ have a non-empty intersection.
The same arguments hold for any two halfplanes in $R$.
Assume for the sake of contradiction that $c_1$ and $c_2$ do not intersect.
As $C$ and $h_1\cap h_2$ are convex, this means that there is an open halfplane $g$ containing more than $\frac{4n}{5}$ of the points of $P$ such that the intersection of the boundary lines of $h_1$ and $h_2$ lies in $\overline{g}$, the complement of $g$ (see \fig{Fig:2intersect}).
In particular, $g\cap h_1$ is a strict subset of $\overline{h_2}$.
As $\overline{g}$ contains strictly fewer than $\frac{n}{5}$ of the points of $P$ and $\overline{h_1}$ contains strictly fewer than $\frac{2n}{5}$ of the points of $P$, $g\cap h_1$ must contain strictly more than $\frac{2n}{5}$ of the points of $P$.
However, being a subset of $\overline{h_2}$, which also contains strictly fewer than $\frac{2n}{5}$ of the points of $P$, this is a contradiction.
Thus, by contradiction, $c_1$ and $c_2$ intersect.

\begin{figure}[t]
\centering
\includegraphics[scale=0.8]{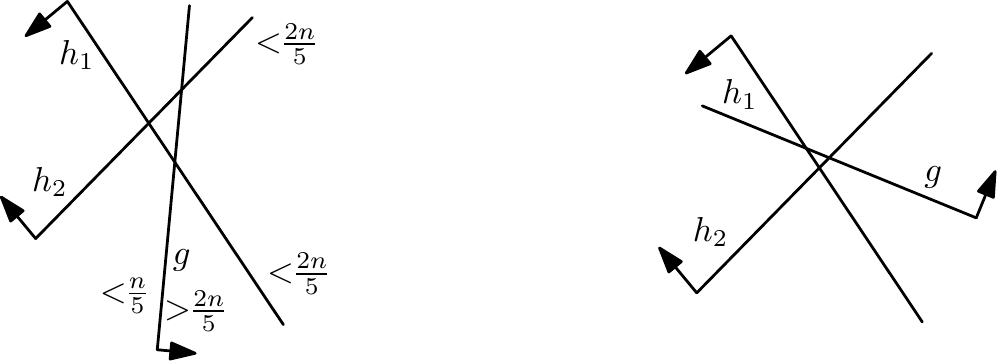}
\caption{Two $c_i$s associated to $L$ must intersect (left).
The intersection is non-empty in other variants (right).}
\label{Fig:2intersect}
\end{figure}

As neither three halfplanes in $L$ nor two halfplanes in $L$ and $C$ have an empty intersection, 
Helly's Theorem entails that there exists a point in both $C$ and all halfplanes in $L$, i.e., all $c_i$s associated to $L$ have a non-empty intersection $D_L$.
Again, the same holds for $R$, with a non-empty intersection $D_R$.
Placing $p_1$ in $D_L$ and $p_2$ in $D_R$, we have thus constructed two points such that the conditions (1) and (2) hold.
\end{proof}

This result is tight in the following sense: There is a point set for which it is not possible to improve both conditions at the same time, that is, it is not possible to find two points such that any halfplane containing one of them contains strictly more than $\frac{n}{5}$ of the points and any halfplane containing both of them contains strictly more than $\frac{2n}{5}$ of the points.
For this consider a set of $n=5k$ point arranged in the following way.
Partition the points into 5 sets $P_1,\ldots, P_5$ of $k$ points each.
Place $P_1,\ldots, P_5$ in such a way that the convex hull of each $P_i$ is disjoint from the convex hull of the union of the other four sets (see \figurename~\ref{Fig:2points_tight}).

\begin{figure}[ht]
\centering
\includegraphics[scale=0.9]{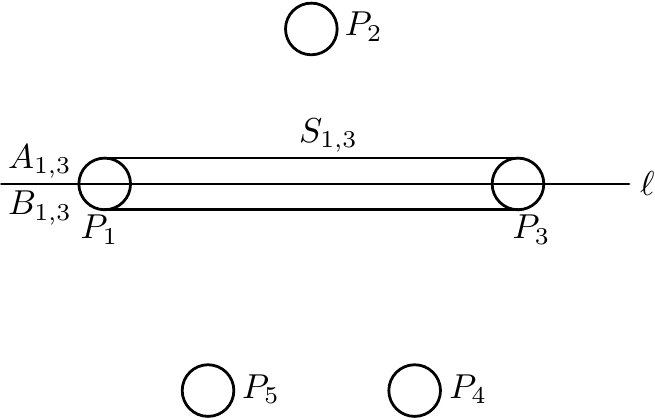}
\caption{A construction for which the bounds of Theorem~\ref{thm:two_points_plane} cannot be improved.}
\label{Fig:2points_tight}
\end{figure}

Denote by $S_{i,j}$ the convex hull $\CH(P_i\cup P_j)$ of $P_i\cup P_j$.
Let $\ell$ be a line through $\CH(P_i)$ and $\CH(P_j)$.
Note that any other set $P_m$ is not separated by $\ell$ (i.e., lies entirely on one side).
Let $A_{i,j}$ be the side of $\ell$ containing fewer of the other sets and let $B_{i,j}$ be the other side.
For any point $q$ in $\CH(P_1\cup\ldots\cup P_5)$ we say that $q$ is \emph{above} $S_{i,j}$ if it is not in $S_{i,j}$ but it is in $A_{i,j}$.
Similarly, for any point $q$ in $\CH(P_1\cup\ldots\cup P_5)$ we say that $q$ is \emph{below} $S_{i,j}$ if it is not in $S_{i,j}$ but it is in $B_{i,j}$.
Suppose, for the sake of contradiction, that there exist two points $p_1$ and $p_2$ such that any halfplane containing one of them contains strictly more than $k$ of the points of $P_1\cup\ldots\cup P_5$ and any halfplane containing both of them contains strictly more than $2k$ of the points of $P_1\cup\ldots\cup P_5$.
Consider two sets $P_i$ and $P_j$ such that $A_{i,j}$ contains exactly one other set.
First we note that neither $p_1$ nor $p_2$ can lie above $S_{i,j}$ as otherwise we can find a halfplane containing that point and only one of the sets, i.e., only $k$ points.
Similarly, we cannot place both $p_1$ and $p_2$ below $S_{i.j}$, as otherwise we can find a halfplane containing both points and only two of the sets, i.e., only $2k$ points.
Also, we must clearly place both $p_1$ and $p_2$ in $\CH(P_1\cup\ldots\cup P_5)$.
Thus, for any two sets $P_i$ and $P_j$ such that $A_{i,j}$ contains exactly one other set, $S_{i,j}$ must contain at least one of $p_1$ and~$p_2$.
However, there are five such $S_{i,j}$ and $P_1,\ldots, P_5$ can be placed in such a way that no three of them have a common intersection.
So no matter how we place $p_1$ and $p_2$, one of the $S_{i,j}$ will be empty.

\section{An arbitrary number of points}
\label{sec:kpoints}

We now strengthen Theorem~\ref{thm:two_points_plane} in four ways: we allow for arbitrarily many query points, we extend it to higher dimensions, we consider mass distributions instead of point sets, and we give a range of possible bounds:

\kPointsGeneral*

We use the following observation, which follows from the fact that for an empty intersection of $d+1$ halfspaces, any point with non-zero mass is in at most $d$ such halfspaces.

\begin{observation}
\label{obs:HellyCondition}
Let $\mu$ be a mass distribution in $\mathbb{R}^d$ with $\mu(\mathbb{R}^d)=1$.
Let $h_1,\ldots, h_{d+1}$ be $d+1$ open halfspaces with $h_1\cap \ldots\cap h_{d+1}=\emptyset$.
Then $\mu(h_1) + \ldots + \mu(h_{d+1}) \leq d$.
\end{observation}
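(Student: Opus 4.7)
The plan is to translate the pointwise combinatorial statement — that no single location can lie in all $d+1$ of the halfspaces simultaneously — into a bound on the sum of measures via integration of indicator functions.

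First I would observe that the assumption $h_1\cap\cdots\cap h_{d+1}=\emptyset$ is equivalent to saying that no point $x\in\mathbb{R}^d$ belongs to all $d+1$ of the $h_i$ at once. Writing $\chi_{h_i}$ for the characteristic function of $h_i$, this pointwise statement reads
\[
\sum_{i=1}^{d+1} \chi_{h_i}(x) \leq d \quad \text{for every } x \in \mathbb{R}^d,
\]
since at most $d$ of the summands can be equal to $1$ at any given $x$ (else all $d+1$ would be $1$, putting $x$ in the empty intersection).

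Next I would integrate this inequality against $\mu$. Each $h_i$ is open, hence $\mu$-measurable by the definition of a mass distribution, so $\mu(h_i) = \int_{\mathbb{R}^d} \chi_{h_i}\, d\mu$ is well defined. By linearity of the integral and monotonicity,
\[
\sum_{i=1}^{d+1} \mu(h_i) \;=\; \int_{\mathbb{R}^d} \sum_{i=1}^{d+1} \chi_{h_i}(x)\, d\mu(x) \;\leq\; \int_{\mathbb{R}^d} d\, d\mu(x) \;=\; d\cdot\mu(\mathbb{R}^d) \;=\; d,
\]
which is exactly the claimed bound.

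There is really no obstacle here: the observation is a one-line integration applied to a trivial combinatorial fact, which is presumably why the authors state it as an observation rather than a lemma. The only fine point worth flagging is that one must invoke measurability of open sets under $\mu$, which is guaranteed by the definition of mass distribution recalled earlier in the paper.
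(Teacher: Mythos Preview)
Your argument is correct and matches the paper's own justification, which is merely the one-line remark that any point lies in at most $d$ of the $d+1$ halfspaces. You have simply made explicit the integration of indicator functions that this remark implicitly invokes.
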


\begin{proof}[Proof of Theorem~\ref{thm:k_points_2d}]
The result is straightforward for $d=1$, so assume $d\geq 2$.
Again the condition that for each closed halfspace $h'$ containing $j$ of the points $p_1,\ldots, p_k$ we have $\mu(h')\geq\alpha_j$ is equivalent to the condition that every open halfspace $h$ with $\mu(h)>1-\alpha_j$ must contain at least $k-j$ of the points $p_1,\ldots, p_k$.
Let $\alpha_0 = 0$.
For $1 \leq j \leq k$, we call an open halfspace $h$ a \emph{$j$-halfspace} if $1-\alpha_{k-j+1}<\mu(h)\leq 1-\alpha_{k-j}$.
Consider the $x_1$-$x_2$-plane, denoted by $X$, and for each vector $v=(v_1,v_2,\ldots,v_d)$ in $\mathbb{R}^d$ let $\pi(v)=(v_1,v_2,0,\ldots,0)$ be the projection of $v$ to $X$.
Let $v_1,\ldots,v_k$ be $k$ unit vectors in $X$ with the property that the angle between any $v_i$ and $v_{i+1}$ is $\frac{2\pi}{k}$.
Note that this implies that also the angle between $v_k$ and $v_1$ is $\frac{2\pi}{k}$.
For each $v_i$ we construct a \emph{principal set} $V_i$ of halfspaces as follows:
For each $j$, consider all $j$-halfspaces.
For any such halfspace~$h$, let $n(h)$ be the normal vector to its bounding hyperplane that points into $h$.
Let $h$ be in $V_i$ if the angle between $\pi(n(h))$ and $v_i$ is at most $\frac{j\pi}{k}$.
If $\pi(n(h))=0$, place $h$ arbitrarily in $j$ of the $V_i$'s.
Note that with this construction each $j$-halfspace is contained in exactly $j$ principal sets.
Thus, if, for each principal set, we can pick a point in all its halfplanes, then each $j$-halfplane contains $j$ points.

It remains to show that the halfspaces in each principal set have a common intersection.
Let $h_1,\ldots,h_{d+1}$ be $d+1$ halfspaces in $V_i$ and assume for the sake of contradiction that they have no common intersection.
Then the positive hull (conical hull) of their projected normal vectors must be $X$, and in particular there are three of them, w.l.o.g.\ $h_1$, $h_2$ and $h_3$, whose projected normal vectors already have $X$ as their positive hull.
Further, among those three halfspaces, there are two of them, w.l.o.g.\ $h_1$ and $h_2$, such that the angles between their projected normal vectors and $v_i$ sum up to more than $\pi$.
If $h_1$ is a $j_1$-halfspace, then by construction of $V_i$ we have that the angle between $\pi(n(h_1))$ and $v_i$ is at most $\frac{j_1\pi}{k}$.
Analogously, if $h_2$ is a $j_2$-halfspace, the angle between $\pi(n(h_2))$ and $v_i$ is at most $\frac{j_2\pi}{k}$.
By the choice of $h_1$ and $h_2$ we thus have $\frac{(j_1+j_2)\pi}{k}>\pi$, which is equivalent to $j_1+j_2>k$, and to $j_1+j_2\geq k+1$,
as $j_1$ and $j_2$ are integers.
By definition of a $j$-halfspace we have
\[
\mu(h_1)+\mu(h_2)>1-\alpha_{k+1-j_1}+1-\alpha_{k+1-j_2}\enspace .
\]
Furthermore we have $\mu(h_i)>1-\alpha_k$ for every $i\in\{1,\ldots,d+1\}$, and thus
\[
\mu(h_1)+\mu(h_2)+\mu(h_3)+\ldots+\mu(h_{d+1})>1-\alpha_{k+1-j_1}+1-\alpha_{k+1-j_2}+(d-1)(1-\alpha_k)\enspace ,
\]
which is equivalent to
\[
(d-1)\alpha_k+\alpha_{k+1-j_1}+\alpha_{k+1-j_2}>d+1-(\mu(h_1) + \ldots + \mu(h_{d+1})) \enspace .
\]
As $k+1-j_1+k+1-j_2=2k+2-(j_1+j_2)\leq k+1$, we have that $(d-1)\alpha_k+\alpha_{k+1-j_1}+\alpha_{k+1-j_2}\leq 1$ and thus $\mu(h_1) + \ldots + \mu(h_{d+1})>d$, which is a contradiction to Observation~\ref{obs:HellyCondition}.
\end{proof}

Setting $\alpha_j=\frac{j}{kd+1}$, we get a bound for the generalized Tukey depth:

\begin{corollary}
Let $\mu$ be a mass distribution in $\mathbb{R}^d$ with $\mu(\mathbb{R}^d)=1$.
Then there exist $k$ points $p_1,\ldots,p_k$ in $\mathbb{R}^d$ 
with generalized Tukey depth
$ 
\gtd_{\mu}(\{p_1,\ldots,p_k\})=\frac{1}{kd+1} 
$. 
\end{corollary}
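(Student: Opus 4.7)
The plan is to apply Theorem~\ref{thm:k_points_2d} directly with the choice $\alpha_j=\frac{j}{kd+1}$ for $j=1,\ldots,k$, and then read off the generalized Tukey depth from its conclusion. So the proof is essentially a verification that this particular choice of $\alpha_j$ satisfies the hypotheses of the theorem, followed by a one-line computation.

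First I would check the two hypotheses. Monotonicity $\alpha_1\leq\alpha_2\leq\cdots\leq\alpha_k$ is immediate since $j\mapsto j/(kd+1)$ is increasing. For the arithmetic inequality, fix any $i,j$ with $i+j\leq k+1$ and observe
\[
(d-1)\alpha_k+\alpha_i+\alpha_j \;=\; \frac{(d-1)k+i+j}{kd+1} \;=\; \frac{kd-k+i+j}{kd+1}\;\leq\;\frac{kd+1}{kd+1}=1,
\]
where the inequality uses $i+j\leq k+1$. Thus the hypotheses of Theorem~\ref{thm:k_points_2d} hold.

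Next, invoking that theorem produces points $p_1,\ldots,p_k$ such that every closed halfspace $h$ containing exactly $j$ of them satisfies $\mu(h)\geq j/(kd+1)$. Setting $Q=\{p_1,\ldots,p_k\}$, for any halfspace $h$ with $|h\cap Q|=j\geq 1$ we therefore have $\mu(h)/|h\cap Q|\geq 1/(kd+1)$, so by the definition of generalized Tukey depth, $\gtd_{\mu}(Q)\geq 1/(kd+1)$. (Equality can be read as ``at least,'' which matches the statement.) The main ``obstacle'' is really just noticing that the fraction $(d-1)k+i+j$ simplifies correctly against the bound $k+1$ on $i+j$; once the denominator is chosen as $kd+1$, everything falls into place.
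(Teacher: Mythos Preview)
Your proof is correct and follows exactly the paper's approach: the paper simply states ``Setting $\alpha_j=\frac{j}{kd+1}$, we get a bound for the generalized Tukey depth,'' and your verification of the hypotheses of Theorem~\ref{thm:k_points_2d} together with the one-line deduction of $\gtd_\mu(Q)\geq\frac{1}{kd+1}$ spells out precisely what the paper leaves implicit. Your remark that the equality sign should be read as ``at least'' is also apt.
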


\section{Triangles}
\label{sec:simplex}

As mentioned before, the $\frac{1}{3}$-quantile and the $\frac{2}{3}$-quantile can also be interpreted as a one-dimensional simplex with the property that every halfline that contains a part of the simplex contains at least $\frac{1}{3}$ of the underlying data set and every halfline that contains the whole simplex contains at least $\frac{2}{3}$ of the underlying data set.
For this interpretation, we give a generalization to two dimensions.
For ease of presentation, we only give a proof for point sets instead of mass distributions and for fixed values of $\alpha$ and $\beta$.

\begin{theorem}\label{thm:three_points_plane}
Let $P$ be a set of $n$ points in general position in the plane.
Then there are three points $p_1$, $p_2$ and $p_3$ in $\mathbb{R}^2$ such that
\begin{enumerate}
\item[(1)] each closed halfplane containing one of the points $p_1$, $p_2$ and $p_3$ contains at least $\frac{n}{6}$ of the points of $P$ and 
\item[(2)] each closed halfplane containing all of $p_1$, $p_2$ and $p_3$ contains at least $\frac{n}{2}$ points of~$P$.
\end{enumerate}
\end{theorem}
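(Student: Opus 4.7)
The plan is to follow the dualization-plus-Helly strategy used in the proof of Theorem~\ref{thm:two_points_plane}. Condition~(1) is equivalent to saying that every open halfplane with more than $5n/6$ points contains all three $p_i$, and condition~(2) to every open halfplane with more than $n/2$ points containing at least one of them. Let $C := \bigcap\{h : |h \cap P| > 5n/6\}$ be the intersection of all open halfplanes with more than $5n/6$ points; this is a nonempty closed convex region containing the Tukey centerpoint, and every $p_i$ must lie in $C$.

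For condition~(2), I would partition the family $H$ of open halfplanes with more than $n/2$ points into three principal sets $V_1, V_2, V_3$: fix three unit vectors $v_1, v_2, v_3$ in the plane at pairwise angle $2\pi/3$ and put $h \in V_i$ whenever the inward normal of $h$ makes an angle at most $\pi/3$ with $v_i$ (breaking ties arbitrarily). Then each $h \in H$ lies in a unique $V_i$, and placing $p_i$ in a common point of $C$ and every halfplane of $V_i$ ensures condition~(2). Existence of such a common point would follow from Helly's theorem applied to the convex family $\{C\} \cup V_i$, provided every three members have a common point.

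The straightforward parts of the Helly verification go through: three halfplanes from $V_i$ have normals in a $2\pi/3$-arc, which does not positively span the plane, so their common intersection is nonempty (in fact unbounded in direction $-v_i$); and $C \cap h$ is nonempty for any $h \in V_i$ because applying the centerpoint theorem to $h \cap P$ (of size greater than $n/2 > n/3$) yields a point in $h$ of Tukey depth greater than $n/6$ in $P$, which lies in $C$. The main obstacle is the remaining case: for $h_1, h_2 \in V_i$, showing $C \cap h_1 \cap h_2 \neq \emptyset$. A direct adaptation of the argument of Theorem~\ref{thm:two_points_plane} --- separate $C$ from $h_1 \cap h_2$ using a defining halfplane $g$ of $C$ with $|g \cap P| > 5n/6$ and the apex of the wedge in $\overline{g}$ --- yields only $|g\cap h_1| \ge |g|+|h_1|-n > 5n/6 + n/2 - n = n/3$ and $g \cap h_1 \subseteq \overline{h_2}$ with $|\overline{h_2}| < n/2$; since $n/3 < n/2$, no contradiction follows.

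The hard part of the proof is closing this gap. The angular restriction that both $n(h_1)$ and $n(h_2)$ lie in a $2\pi/3$-arc around $v_i$ forces the wedge $h_1 \cap h_2$ to open in a direction close to $v_i$, placing its apex essentially opposite and in turn constraining the orientation of $g$. I plan to exploit this extra structure either by (a) choosing $g$ extremally among the defining halfplanes of $C$ --- mirroring the inclusion-minimal horizontal halfplane device from the proof of Theorem~\ref{thm:two_points_plane} --- so that a perturbation argument forces an additional incidence between the apex and the boundary of $g$ and produces the required contradiction, or by (b) introducing a second defining halfplane $g'$ of $C$ on the opposite side of $C$ from $g$ and combining the mass inequalities coming from $g$, $g'$, $h_1$, and $h_2$ until the total exceeds $n$. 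Once this case is resolved, Helly yields a nonempty $D_i := C \cap \bigcap_{h\in V_i} h$ for each $i$, and placing $p_i \in D_i$ produces a triangle satisfying both conditions.
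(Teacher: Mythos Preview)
Your plan is essentially the principal-set partition used in the paper's proof of Theorem~\ref{thm:k_points_2d}, specialised to $k=3$.  The paper explicitly observes, right before its proof of Theorem~\ref{thm:three_points_plane}, that this case corresponds to $\alpha_1=\alpha_2=\tfrac16$, $\alpha_3=\tfrac12$, for which the hypothesis $(d-1)\alpha_k+\alpha_i+\alpha_j\le 1$ of Theorem~\ref{thm:k_points_2d} \emph{fails}.  You have rediscovered exactly the obstruction that makes it fail: with the numerology $1/6,1/2$, the two-halfplane-plus-$C$ Helly triple gives only $n/3<n/2$, not a contradiction.  A Helly/mass bookkeeping using a second defining halfplane $g'$ (your plan~(b)) also falls short: if $C\cap h_1\cap h_2=\emptyset$, Helly on the family $\{h_1,h_2\}\cup\{g:|g\cap P|>5n/6\}$ forces some triple to be empty, and since $|h_i|+|g|+|g'|>n/2+5n/6+5n/6>2n$ rules out $h_i\cap g\cap g'=\emptyset$, the empty triple must be $h_1\cap h_2\cap g$; but then $|h_1|+|h_2|+|g|>11n/6$ does not contradict Observation~\ref{obs:HellyCondition}, and the angular restriction on $n(h_1),n(h_2)$ only pins $n(g)$ to the opposite $2\pi/3$-arc, which by itself does not add mass.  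So neither of your proposed patches, as stated, closes the gap.

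The paper sidesteps this by proving a weaker but sufficient structural fact: among any \emph{three} of the regions $c_i=h_i\cap C$, at least \emph{two} intersect.  This is shown by assuming $c_1,c_2,c_3$ pairwise disjoint, noting that then each wedge $h_i\cap h_j$ has fewer than $n/6$ points (else it would meet $C$), and doing a region count in the arrangement of $h_1,h_2,h_3$ to contradict $|h_3|>n/2$.  From this ``two out of three intersect'' lemma the three points are then placed sequentially: $p_1$ on $\partial C$, $p_2$ on the boundary of the first $c_i$ missing $p_1$, and $p_3$ in the common intersection (Helly) of all $c_j$ missing both $p_1,p_2$, which pairwise intersect by the lemma.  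This is a genuinely different mechanism from a fixed three-direction partition, and it is what you need here.
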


Note that this can also be interpreted as an instance of Theorem~\ref{thm:k_points_2d} with $\alpha_1=\alpha_2=\frac{1}{6}$ and $\alpha_3=\frac{1}{2}$.
However, as $\alpha_3+\alpha_3+\alpha_1>1$, the precondition of Theorem~\ref{thm:k_points_2d} does not apply.
The proof of this result uses similar ideas as the above proofs.

\begin{proof}[Proof of Theorem~\ref{thm:three_points_plane}]
Let $C$ be the intersection of all open halfplanes containing more than $\frac{5n}{6}$ of the points of $P$.
Just as in the proof of Theorem~\ref{thm:two_points_plane}, condition (1) is equivalent to $p_1$, $p_2$ and $p_3$ lying in $C$.
Similarly, condition (2) is equivalent to the following statement:
for every open halfplane $h$ containing more than $\frac{n}{2}$ of the points of~$P$, $h$ contains at least one of $p_1$, $p_2$ and $p_3$.
For the latter, let $n_i$ be a vector and let $N_i$ be the set of all of these halfplanes that have $n_i$ as their normal vector.
The intersection of all elements of $N_i$ is a closed halfspace $h_i$.
Let $K$ be the set of all of these $h_i$, i.e., for every possible direction of $n_i$.
Then, condition (2) is equivalent to the following statement:
for every halfplane $h$ in $K$, $h$ contains at least one of $p_1$, $p_2$ and $p_3$.
For each such halfplane $h_i$, let $c_i$ be the intersection of $h_i$ and $C$.
Note that $c_i$ is compact.
We thus want to show the claim that we can find three points $p_1$, $p_2$ and $p_3$ such that each $c_i$ contains at least one of them.
Let $H$ be the set of $c_i$s that are minimal with respect to set inclusion.
Clearly, it is enough to show the claim just for the elements of $H$.
Let $N$ be the set of normal vectors of the defining halfplanes of the elements of $H$.
Note that there is a natural mapping from $N$ to a subset of the unit circle $S^1$.
We will say that a subset of $N$ is connected, if its image under this mapping to $S^1$ is connected.

First we show that among any three elements of $H$, two of them intersect.
Let $c_1$, $c_2$ and $c_3$ be elements of $H$, and let $h_1$, $h_2$ and $h_3$ be their associated halfplanes.
Assume for the sake of contradiction that $c_1$, $c_2$ and $c_3$ are pairwise disjoint.
Let $n_i$ be the normal vector of $h_i$.
Let $A$ be the positive hull of $n_1$, $n_2$ and $n_3$.
Then $A$ is either a cone or the whole plane.
See \fig{fig:3points_cases}.
If $A$ is the whole plane, then $h_1$, $h_2$ and $h_3$ have no common intersection.
Otherwise, if $A$ is a cone, then one of $n_1$, $n_2$ and $n_3$ can be described as a positive linear combination of the other two.
In particular, $h_1$, $h_2$ and $h_3$ have a common intersection and one of them is redundant in the description of $h_1\cap h_2\cap h_3$.
We thus consider two cases, namely whether $h_1$, $h_2$ and $h_3$ have a common intersection or not.

\begin{figure}
\centering
\includegraphics[scale=0.8]{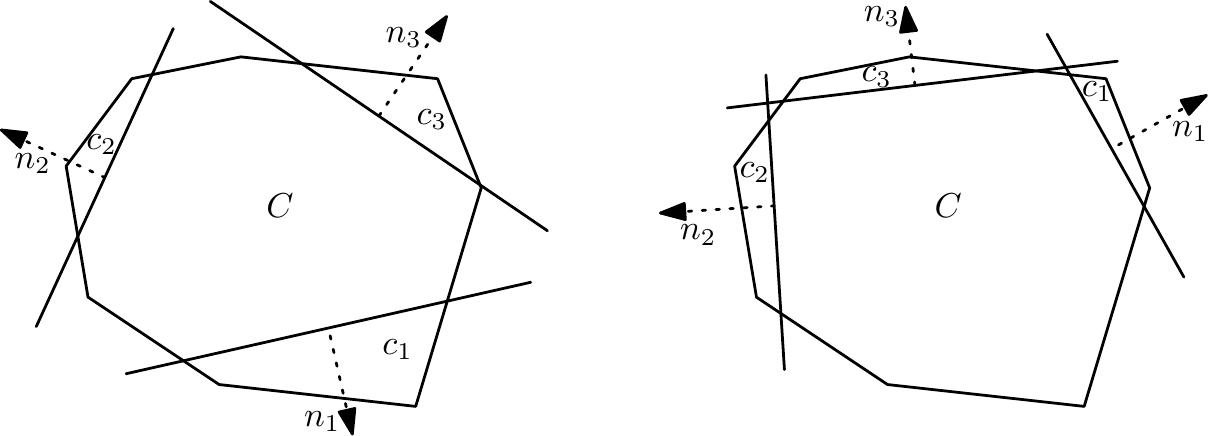}
\caption{Disjoint $c_1$, $c_2$, $c_3$, where the positive hull of the normal vectors spans the whole plane (left) or not (right).}
\label{fig:3points_cases}
\end{figure}

First, assume that $h_1$, $h_2$ and $h_3$ have no common intersection.
Then $h_1$, $h_2$ and $h_3$ partition the plane into seven regions (see \fig{Fig:3points_plane}): $h_i\cap h_j$ for $i\neq j$, $h_i\setminus(h_j\cup h_k)$, for $i,j,k$ all different and $h_1^c\cap h_2^c\cap h_3^c$.
Note that each $h_i\cap h_j$ contains strictly fewer than $\frac{n}{6}$ of the points of $P$, as otherwise the corresponding $c_i$ and $c_j$ intersect.
In particular, $h_2\setminus h_1$ contains more than $\frac{n}{2}-\frac{n}{6}=\frac{n}{3}$ points of $P$.
It follows that $(h_1\cup h_2)^c$ and thus also $h_3\setminus (h_1\cup h_2)$ contains strictly fewer than $\frac{n}{6}$ of the points of $P$.
The number of points in $h_3$ is the sum of the number of points in $h_3\cap h_1$, $h_3\cap h_2$ and $h_3\setminus (h_1\cup h_2)$.
All of these sets contain strictly fewer than $\frac{n}{6}$ of the points of $P$, implying that $h_3$ contains fewer than $\frac{n}{2}$ of the points of $P$, which is a contradiction.
This concludes the first case.

\begin{figure}[ht]
\centering
\includegraphics[scale=0.8]{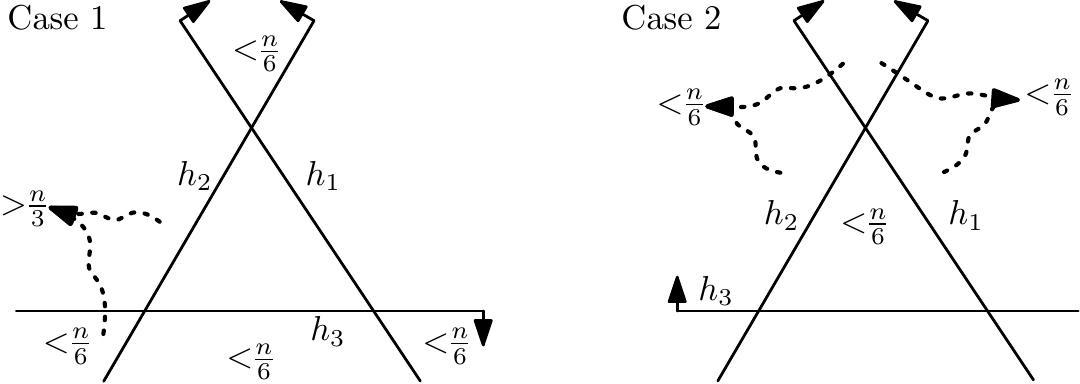}
\caption{No three elements of $H$ can be pairwise disjoint.}
\label{Fig:3points_plane}
\end{figure}

For the second case assume that $h_1$, $h_2$ and $h_3$ have a common intersection and one of the halfplanes is redundant in the description of $h_1\cap h_2\cap h_3$; assume without loss of generality that it is $h_3$.
Just as in the first case, each $h_i\cap h_j$ contains strictly fewer than $\frac{n}{6}$ of the points of $P$.
Again it follows that $h_3\setminus (h_1\cup h_2)$ contains strictly fewer than $\frac{n}{6}$ of the points of $P$.
The sets $h_3\cap h_1$, $h_3\cap h_2$ and $h_3\setminus (h_1\cup h_2)$ cover $h_3$, implying that $h_3$ contains fewer than $\frac{n}{2}$ of the points of $P$, which is again a contradiction.
This concludes the proof that among any three elements of $H$, two intersect.

It remains to show that we can find three points $p_1$, $p_2$ and $p_3$ such that each element of $H$ contains at least one of them.
This can be achieved by picking one element of $H$ and placing two points $p_1$ and $p_2$ at the extreme intersection points with the boundary of $C$;
since any three elements of $H$ intersect, any two elements not containing $p_1$ and $p_2$ must intersect and we may apply Helly's theorem in dimension one.
However, we actually have more flexibility in choosing $p_1$.
Note that the normal vectors pointing into the halfplanes defining the elements of $H$ define a circular order on $H$.
Place $p_1$ at a topmost point of the boundary of $C$.
Let $h_1$ be the first element of $H$ in counterclockwise direction whose defining halfplane does not contain $p_1$ in its interior.
Place $p_2$ at the intersection of the defining line of $h_1$ with the boundary of $C$ that is furthest in counterclockwise direction from $p_1$.
Since $h_1$ is minimal, any element of $H$ intersecting $h_1$ contains either $p_1$ or $p_2$.
(Note that so far $h_1$ does not contain any of $p_1$ or $p_2$.)
Therefore, all elements of $H$ that do not intersect $h_1$ have a common intersection, in which we place $p_3$.
Recall that the defining halfplanes are open and therefore there is no element of $H$ intersecting $h_1$ in a single point.
We therefore may move $p_2$ slightly in clockwise direction, such that it is also contained in $h_1$.
\end{proof}

The general statement can be proved analogously:
\SimplexGeneral*

\section{Construction in the plane}
\label{sec:algorithms}
In this section, we describe algorithms for constructing the points described in Theorems~\ref{thm:two_points_plane} and~\ref{thm:three_points_plane}.
We first observe that the convex regions defined by the intersections of the half-planes in sets like $L$ and~$R$ in the proof of Theorem~\ref{thm:two_points_plane} correspond to levels in the dual line arrangement.
We use the duality $p^* = (y=kx+d) \iff p = (k, d)$ that maps a point $p$ to a line $p^*$.
The \emph{$k$-level} of a line arrangement is the set of points with exactly $k-1$ lines below it and not more than $n-k$ lines above it.
(It thus consists of segments of the line arrangement.)
Suppose we are given $\alpha_1$ and $\alpha_2$, s.t.\ $0 < \alpha_1 \leq \alpha_2$ and $\alpha_1 + 2\alpha_2 = 1$.
Let $U$ be the set of open halfplanes that are above their boundary lines and contain more than $(1-\alpha_2)n$ points of $P$, and let $D_U$ be their intersection.
A point $p$ is in $D_U$ if there is no line through it having at least $\floor{(1-\alpha_2)n+1}$ points of $P$ above it.
If the dual line $p^*$ of $p$ contains a point $\ell^*$ below the $\ceil{\alpha_2 n}$-level of the dual line arrangement of $P$, then $p$ has a supporting line~$\ell$ with more than $(1-\alpha_2)n$ points of $P$ above it.
Since a line has a point below that level if and only if it intersects the interior of its convex hull, the interior of the convex hull of the $\ceil{\alpha_2 n}$-level thus excludes exactly those lines whose primal points are not in~$D_U$.
The supporting lines of the segments of the convex hull of the $\ceil{\alpha_2 n}$-level give the primal points that bound~$D_U$.
Matou\v{s}ek~\cite{matousek_center} describes an algorithm for constructing the $k$-level of a line arrangement in $O(n \log^4 n)$ time.
The \emph{$k$-hull} of a set $P$ of $n$ points in the plane is the set of points $p$ in $\mathbb{R}^2$ such that any closed halfplane defined by a line through $p$ contains at least $k$ points of~$P$.
The set~$C$ in the proof of Theorem~\ref{thm:two_points_plane} is the intersection of all open halfplanes containing more than $\frac{4n}{5}$ points.
$C$ is thus the $\ceil{\frac{n}{5}}$-hull of~$P$.
The $k$-hull of $P$ is obtained by computing the convex hulls of the $k$-level and the $(n-k)$-level of the dual line arrangement of~$P$, which give the upper and lower envelope of the $k$-hull~\cite{matousek_center}.
To construct the points from Theorems~\ref{thm:two_points_plane} and~\ref{thm:three_points_plane} (without explicitly constructing the levels), we use Matou\v{s}ek's algorithmic tools from~\cite{matousek_center}.

\begin{lemma}[{Matou\v{s}ek~\cite[Lemma~3.2]{matousek_center}}]\label{lem:matousek_tangent}
In an arrangement of $n$ lines, let $\gamma$ be the boundary of the convex 
hull of the lines on or below the $k$-level.
Given the arrangement, $k$, and a point~$p$, one can find the tangent to $\gamma$ passing through $p$ and touching $\gamma$ to the right of $p$ (if it exists) in time $O(n \log^2 n)$.
\end{lemma}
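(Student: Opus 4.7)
The plan is to avoid constructing the convex-hull boundary $\gamma$ explicitly --- its complexity is tied to that of the $k$-level and can be super-linear --- and instead to binary-search on the slope of the tangent through $p$, using a decision oracle that consults only the $n$ input lines. I characterize the target as follows: as one rotates a line $\ell_s$ of slope $s$ about $p$, there is a range of slopes for which $\ell_s$ keeps the portion of $\gamma$ to the right of $p$ strictly on one side, and a complementary range for which $\ell_s$ crosses $\gamma$ on that side. The desired right-tangent is the extremal slope separating these two regimes, attained when $\ell_s$ passes through a vertex of $\gamma$.

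Next I set up the decision oracle. For a candidate slope $s$, the line $\ell_s$ crosses $\gamma$ to the right of $p$ iff the ray from $p$ along $\ell_s$ toward the right enters the region on or below the $k$-level. Since that region is defined by a rank condition on the $n$ input lines, the test reduces to determining whether $\ell_s$ lies above or below the $k$-level at suitable $x$-coordinates, which in turn reduces to counting the ranks of the intersections of $\ell_s$ with the input lines --- computable in $O(n \log n)$ time per query by sorting and a linear scan. A sign-change of this count along the ray then witnesses a crossing of the $k$-level, which we translate back into a crossing of $\gamma$ via the relation between the sublevel region and its convex hull.

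Finally, I would drive the search by a parametric-search / slope-selection procedure in the spirit of Cole, Salowe, Steiger, and Szemer\'edi (and of Matou\v{s}ek's own earlier work), extracting the correct critical slope in $O(\log n)$ oracle rounds and giving the claimed $O(n \log^2 n)$ total time. The main obstacle is two-fold: showing that a purely rank-based oracle really does detect tangency to $\gamma$ rather than merely to the $k$-level (so that concavities of the level pointing away from the sublevel region do not fool the test), and arranging the parametric search so that it never enumerates the $\Theta(n^2)$ combinatorially critical slopes. This second point is the technical heart of the lemma and is precisely what makes the $\log^2$ factor --- rather than a naive $\log n$ times a heavier oracle --- the right answer.
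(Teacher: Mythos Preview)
The paper does not prove this lemma; it is quoted from Matou\v{s}ek's centerpoint paper and used as a black box. The only information the present paper offers about its internals appears inside the proof of Lemma~\ref{lem:bitangent}: the algorithm ``requires a sub-algorithm that decides in $O(n \log n)$ time whether a given line $\ell$ intersects the level,'' and that sub-algorithm works by sorting the intersections of the arrangement lines along~$\ell$. So there is no full proof here to compare against, only this structural hint.

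Your outline is consistent with that hint: an $O(n\log n)$ decision oracle wrapped in an outer search contributing one further logarithmic factor. Regarding your two self-identified obstacles: the first one dissolves immediately. Since $\gamma$ is the upper boundary of the convex hull of the sublevel region, a line lies on or above $\gamma$ if and only if it lies on or above every point of that region --- a line supports a set exactly when it supports its convex hull --- and that is precisely the rank condition your oracle tests. Concavities of the $k$-level cannot fool the test, because you never compare against the level itself, only against whether the query line dips below it anywhere. The second obstacle --- driving the outer search in $O(\log n)$ oracle calls without enumerating the $\Theta(n^2)$ candidate slopes --- is where the real content is, and you have deferred it to a citation rather than an argument. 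A naive Megiddo-style parametric search on an $O(n\log n)$ sorting routine would overshoot by a logarithmic factor; hitting $O(n\log^2 n)$ needs either Cole's refinement or an explicit prune-and-search on the $n$ input lines. As written, your proposal is a correct plan that stops just short of the one step that actually earns the stated bound.
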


\begin{lemma}\label{lem:bitangent}
Given an arrangement of $n$ lines and two numbers $k < l \leq n$, as well as a halfplane $h$, a line separating
the $k$-level from the intersection of $h$ with the $l$-level can be found in $O(n \log^3 n)$ time, if it exists.
The separating line is tangent to both level parts and, from left to right, first intersects the $k$-level and then the relevant part of the $l$-level.
\end{lemma}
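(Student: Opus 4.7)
The plan is to binary search on the slope of the sought separating line, using Matou\v{s}ek's tangent routine (Lemma~\ref{lem:matousek_tangent}) as the inner decision subroutine. First, I would reframe the problem as follows: let $\gamma_k$ denote the upper boundary of the convex hull of the lines on or below the $k$-level (a concave, $x$-monotone chain), and let $\gamma_{l,h}$ denote the lower boundary of the region above the $l$-level intersected with $h$ (a convex, $x$-monotone chain that may include portions of the boundary line of $h$). A valid separating line exists iff there is a common tangent that leaves $\gamma_k$ below it and $\gamma_{l,h}$ above it, with the contact point on $\gamma_k$ to the left of the one on $\gamma_{l,h}$.

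Next, I would parameterize candidate tangents by their slope~$s$ and let $a(s)$, $b(s)$ be the $y$-intercepts of the tangents to $\gamma_k$ (from above) and $\gamma_{l,h}$ (from below) with slope~$s$, respectively. Since $\gamma_k$ is concave and $\gamma_{l,h}$ is convex, $a(s)$ is a convex piecewise-linear function of~$s$ and $b(s)$ is a concave one, each with $O(n)$ breakpoints. Consequently $\delta(s) := b(s) - a(s)$ is concave, so the set of feasible slopes is an interval, and a separating line exists iff that interval is nonempty. This interval can be located by a ternary/binary search on~$s$ restricted (implicitly) to the $O(n)$ critical slopes coming from the breakpoints.

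In each iteration, I would evaluate $a(s)$, $b(s)$, and the corresponding tangent points for a chosen slope~$s$ by a single invocation of Lemma~\ref{lem:matousek_tangent} per chain: finding the tangent with a prescribed slope is the limit of finding the tangent from a query point as that point recedes to infinity in the perpendicular direction, and Matou\v{s}ek's algorithm supports this limiting case within the same $O(n \log^2 n)$ time. For $\gamma_{l,h}$, I would augment the line arrangement with the boundary line of~$h$ so that the level and the halfplane constraint are handled uniformly by the same tangent-finding routine. Two invocations per iteration, over $O(\log n)$ iterations, give the promised $O(n \log^3 n)$ total running time. Once the search terminates with a feasible slope, the separating line is the reported tangent, and a final check (one more call per chain) certifies the left-to-right order of the tangent points required by the lemma statement.

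The main obstacle I anticipate is the handling of $\gamma_{l,h}$ near the boundary of~$h$: the lower boundary of the relevant region may transition between a piece of the $l$-level and a piece of the boundary of~$h$, and the tangent with a given slope may touch either piece. Encapsulating this case analysis inside the tangent query without losing a log factor, and ensuring that $\delta$ remains concave (rather than only piecewise concave) across that transition, are the delicate steps; once they are in place, the concavity of~$\delta$ makes the binary search completely standard.
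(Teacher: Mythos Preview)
Your slope-parameterization idea is clean and genuinely different from the paper's approach, but as written it has a gap in the step you treat as routine: the claim that a binary (or ternary) search on~$s$ terminates in $O(\log n)$ iterations. You correctly observe that $\delta(s)=b(s)-a(s)$ is concave piecewise-linear with $O(n)$ breakpoints, but those breakpoints are the slopes of the edges of the convex hulls $\gamma_k$ and $\gamma_{l,h}$, and you never have them in hand. For $\gamma_k$ the breakpoints happen to lie among the $n$ slopes of the arrangement lines, so one could sort those and binary-search; you do not state this, and more importantly it fails for $\gamma_{l,h}$: once you clip the $l$-level to $h$, the lower hull of $\nu$ can acquire ``bridge'' edges between disconnected pieces whose slopes are determined by arbitrary intersection points and are not among the input slopes. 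Without an explicit list of breakpoints, ternary search on a continuous parameter does not terminate after $O(\log n)$ probes, and each probe already costs $O(n\log^2 n)$. A further loose end is that Lemma~\ref{lem:matousek_tangent} is stated for tangents \emph{through a given point}, not tangents \emph{of a given slope}; the limiting argument is plausible but would need its own justification. Finally, ``augmenting the arrangement with the boundary of $h$'' does not isolate the halfplane constraint to the $l$-level; adding a line shifts every level.

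The paper sidesteps all of this by searching not over slopes but over the vertices of $\gamma$ (the convex hull of the $k$-level), using tangent-through-a-point queries to $\nu$ as the oracle. The subtlety is that constructing $\gamma$ outright costs $O(n\log^4 n)$; the paper therefore interleaves the search with Matou\v{s}ek's recursive construction of~$\gamma$: at each level of recursion one identifies, via a constant number of tangent queries, the single slab that can contain the bitangent's contact point, prunes half the lines, and recurses only there. The tangent queries to $\nu$ still cost $O(n\log^2 n)$ each (they cannot be pruned), and the recursion depth is $O(\log n)$, giving $O(n\log^3 n)$ overall. That pruning step is the missing ingredient in your plan; if you want to keep the slope-based viewpoint, you would need an analogous mechanism---e.g., parametric search in the style of Langerman--Steiger, which the paper in fact mentions as an alternative route to the same bound.
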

\begin{proof}
Let $\gamma$ be the boundary of the convex hull of all points below the $k$-level, and let $\nu$ be the intersection of $h$ with the $l$-level.
Note that $\nu$ might not be connected.
Suppose we want our line to be the counterclockwise bitangent of $\gamma$ and $\nu$ (i.e., from left to right, it first intersects $\gamma$, which has no point above it, and then $\nu$).
Our algorithm works by obtaining tangents to $\nu$ through points on $\gamma$.
Matou\v{s}ek's $O(n \log^2 n)$ algorithm for determining the tangent to a level through a given point that is to the right of that point~\cite[Lemma~3.2]{matousek_center} (our Lemma~\ref{lem:matousek_tangent}) also directly works for parts of a level such as~$\nu$: It requires a sub-algorithm that decides in $O(n \log n)$ time whether a given line~$\ell$ intersects the level (or, in our case, the partial level $\nu$).
This can be done by sorting the intersection of the lines of the arrangements along $\ell$ (see also~\cite[Lemma~3.1]{matousek_center}) as well as along the line bounding~$h$;
$\ell$ either intersects the relevant part of $\nu$, or we can compare the intersection of $h$ with $\ell$ to the intersections of $h$ with $\nu$ to determine whether there is a point of $\nu$ below $\ell$.

Suppose first we are given $\gamma$.
(It requires $O(n \log^4 n)$ time though to obtain it, so we eventually get rid of this assumption.)
The convex hull of a level is known to have at most $n$ vertices~\cite[Lemma~2.1]{matousek_center}.
For a point $p$ on $\gamma$, we can find in $O(n \log^2 n)$ time the point $q$ on $\nu$ such that the line $pq$ has no point on $\nu$ below it.
We can thus find, by binary search on the $O(n)$ vertices of~$\gamma$, a vertex $p$ with $q$ on $\nu$ such that $pq$ separates $\gamma$ and $\nu$.
This gives an $O(n \log^4 n)$ time algorithm for obtaining the bitangent.
To improve on that bound, we need to get rid of the explicit construction of $\gamma$ to find the tangents to~$\nu$.

To this end, we consider Matou\v{s}ek's algorithm for constructing the convex hull boundary~$\gamma$ of a level and compute only the relevant part (see~\cite[Section~4]{matousek_center}).
In particular, the algorithm works by finding, for a constant $c$ and two vertical lines, $(c-1)$ further vertical lines between the given ones such that there are at most $n^2/c$ crossings of the arrangement between two of these verticals.
This can be done in $O(n)$ time (as described in~\cite{construction_of_epsilon_nets}).
The tangents on~$\gamma$ at the intersection points with the vertical lines can be computed in $O(n \log^3 n)$ time~\cite[Lemma~3.3]{matousek_center}.
It is shown in~\cite{matousek_center} that, when choosing $c=64$, there are at most $n/2$ lines of the arrangement relevant for the construction of $\gamma$ between two such vertical lines, and these lines can be found in $O(n)$ time.
The original algorithm proceeds recursively within each interval defined by two neighboring vertical lines after removing the non-relevant lines.
In our adaption, however, we find the interval that contains the point $p$ on $\gamma$ such that a tangent to $\gamma$ through the vertex $p$ with $q$ on $\nu$ such that $pq$ separates $\gamma$ and~$\nu$.
(We do this by considering the tangent to $\gamma$ at each of the constant number of intersection of a vertical line with $\gamma$.)
When we have found this interval, we can prune $n/2$ of the lines and recurse inside this interval.
Note, however, that we cannot prune the set of lines when looking for a tangent to $\nu$.
Thus, in each recursive call, we need $O(n \log^2 n)$ time for computing the tangent.
As the recursion depth is $O(\log n)$, this amounts to $O(n \log^3 n)$ in total.
Also, for $n_i$ lines during the $i$th recursion, we need $O(n_i \log^3 n_i) \subseteq O(n_i \log^3 n)$ time for determining the intervals.
As $n_i$ decreases geometrically, this also amounts to $O(n \log^3 n)$.
This is the total running time for finding the bitangent, as claimed.
\end{proof}

We call such a line the \emph{counterclockwise bitangent} of the two subsets of the plane (i.e., the intersection with the region not above it has smaller $x$-coordinate than the intersection with the region not below it).
Note that by mirroring the plane horizontally or vertically, the lemma also provides other types of bitangents.
\fig{fig:levels} shows an example.

\begin{figure}
\centering
\includegraphics[page=6]{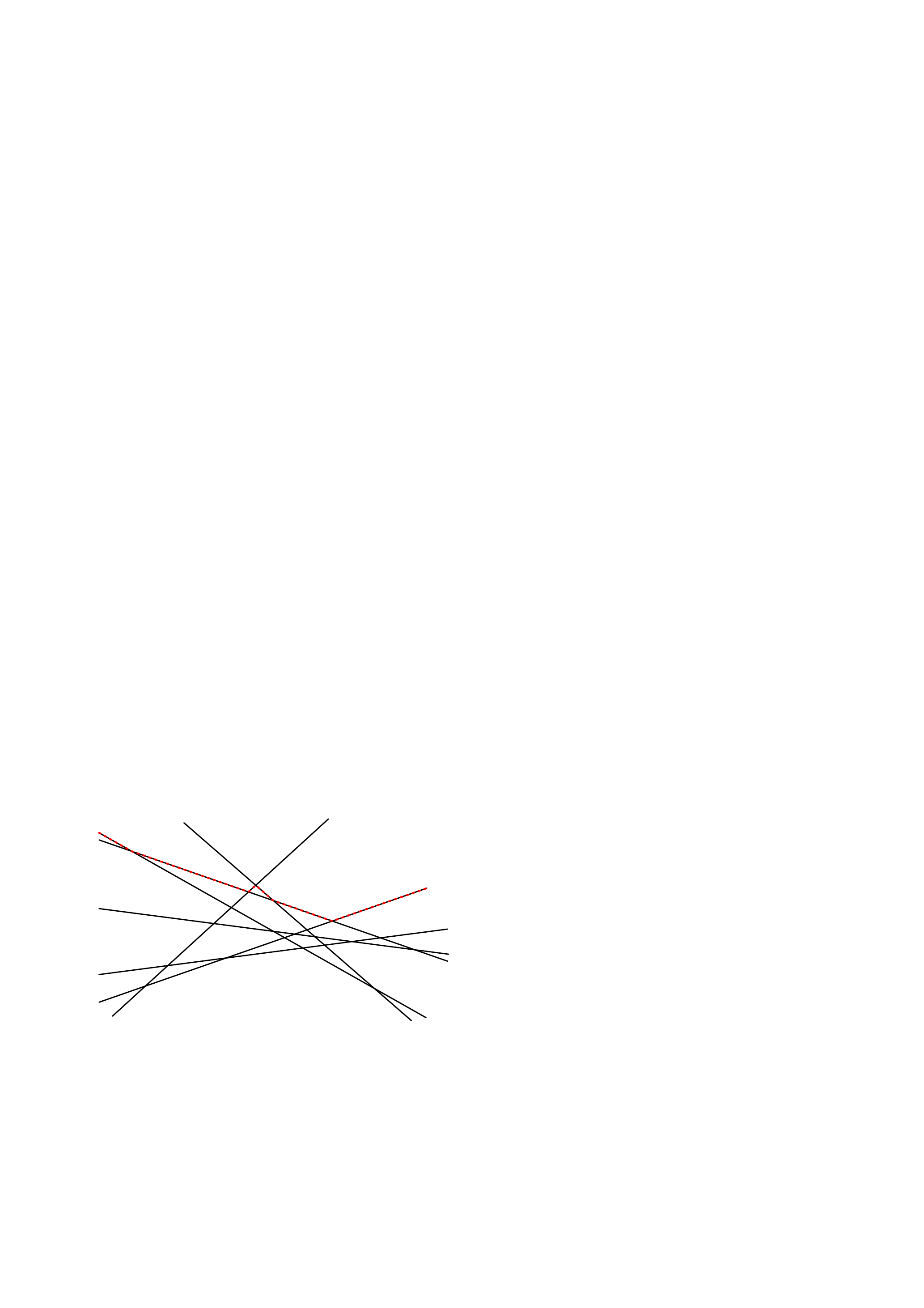}
\caption{A counterclockwise bitangent (brown, dash-dotted) between the $\ceil{\frac{2n}{5}}$-level (blue) and the $\floor{\frac{4n}{5}}$-level (red) of an arrangement of seven lines (left).
The primal point configuration is shown to the right; there, the orange region corresponds to the $\ceil{\frac{n}{5}}$-hull 
$C$, and the hatched green region corresponds to~$D_U$.
Observe that there can be vertices of $D_U$ outside of~$C$.}
\label{fig:levels}
\end{figure}

Lemma~\ref{lem:bitangent} can also be obtained using the framework of Langerman and Steiger~\cite{langerman_steiger}, similar to the computation of the Oja depth in~\cite{oja}.
(We merely sketch its application in this paragraph.)
Let $P$ be the primal point set of the arrangement.
In the formulation of~\cite{oja}, if there exists a function $f : \mathbb{R}^2 \to \mathbb{R}$ that has the minimum on the intersection of supporting lines of point pairs of $P$ with the property that, for a given point $a$, we can find in $T(n)$ time a \emph{witness halfplane} through $a$ such that for all points $q$ in that halfplane, we have $f(q) \geq f(a)$, then there is an $O(T(n) \log^2 n + n \log^3 n)$ time algorithm for finding such a minimum point.
We can test in $O(n \log n)$ time whether the dual $a^*$ of $a$ intersects $\gamma$ or $\nu$, defining a witness halfplane using the primal of such an intersection point (as described at the beginning of the previous proof), or the intersection point of $a^*$ with the line defining~$h$.
If the dual line $a^*$ separates $\gamma$ from $\nu$, we need a separator with a larger slope to obtain a counterclockwise bitangent, and thus define a witness halfplane to the right of a vertical line through~$a$.

Lemma~\ref{lem:bitangent} can now be used to obtain the following result.

\begin{theorem}\label{thm:two_points_algorithm}
Given a set $P$ of $n$ points in the plane, two points satisfying the conditions of Theorem~\ref{thm:two_points_plane} can be constructed in time $O(n \log^3 n)$.
\end{theorem}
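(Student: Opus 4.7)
The plan is to exploit the dual reformulation established at the start of Section~\ref{sec:algorithms} together with Lemma~\ref{lem:bitangent}. By the proof of Theorem~\ref{thm:two_points_plane}, it suffices to exhibit one point $p_1$ in the convex region $D_L = C \cap \bigcap_{h \in L} h$ and, symmetrically, one point $p_2 \in D_R$; non-emptiness of both regions is already guaranteed by that proof.

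First I would translate the defining conditions of $D_L$ to the dual. A point $p_1$ lies in every upper halfplane containing more than $\frac{3n}{5}$ points of $P$ precisely when its dual line $p_1^*$ lies above every dual point below the $\ceil{2n/5}$-level, equivalently, above the upper convex hull boundary $\gamma_L$ of the region on or below that level. The ``upper'' part of the condition $p_1 \in C$ is automatically subsumed, since the $\ceil{n/5}$-level sits below the $\ceil{2n/5}$-level; only the ``lower'' part of $C$ remains, namely $p_1^*$ must lie below the lower convex hull boundary $\gamma_C$ of the region on or above the $\floor{4n/5}$-level. Since a line is below $\gamma_C$ if and only if it is below the $\floor{4n/5}$-level itself at every $x$, the task reduces to finding a line that is above $\gamma_L$ and below the $\floor{4n/5}$-level -- exactly a counterclockwise bitangent of these two level structures in the dual.

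I would then invoke Lemma~\ref{lem:bitangent} with $k = \ceil{2n/5}$, $l = \floor{4n/5}$, and $h = \R^2$ to compute this bitangent in $O(n \log^3 n)$ time; its existence is assured by the non-emptiness of $D_L$. Dualizing the resulting line back gives $p_1$. For $p_2 \in D_R$ I would reflect the arrangement about a horizontal line (which swaps upper and lower halfplanes, and hence the roles of $L$ and $R$) and apply the mirrored variant of Lemma~\ref{lem:bitangent} noted in the paragraph following it, for another $O(n \log^3 n)$ time. Summing the two calls yields the claimed total running time.

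The main obstacle will be in the verification step rather than in the algorithm itself: one must confirm that the counterclockwise bitangent produced by Lemma~\ref{lem:bitangent} really does sit in the feasible strip between $\gamma_L$ and $\gamma_C$ (using that ``line below the $l$-level pointwise'' and ``line below $\gamma_C$'' coincide), handle the parity issues when converting the strict inequalities ``more than $\frac{3n}{5}$'' and ``more than $\frac{4n}{5}$'' into integer levels $\ceil{2n/5}$ and $\floor{4n/5}$, and ensure that the reflection and orientation conventions for the $D_R$ side are matched up consistently. Once this bookkeeping is settled, the theorem follows directly from two applications of Lemma~\ref{lem:bitangent}.
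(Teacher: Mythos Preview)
Your approach is essentially the same as the paper's: both reduce to finding, in the dual, a bitangent separating the $\ceil{2n/5}$-level from the $\floor{4n/5}$-level via Lemma~\ref{lem:bitangent} with $h=\R^2$, dualizing to obtain $p_1$, and repeating after a reflection for~$p_2$. The only cosmetic difference is that the paper relabels the left/right partition of Theorem~\ref{thm:two_points_plane} as an upper/lower partition (writing $D_U$ rather than $D_L$) so that it aligns directly with the point--line duality; once you make this relabeling explicit, your argument and the paper's coincide.
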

\begin{proof}
To find a point $p_1$ in the intersection of $C$ and $D_U$, observe first that we can restrict our attention in the dual to the convex hull of the points above the $\floor{(1-\alpha_1)n}$-level of the dual line arrangement.
This is because any primal line with more than $(1-\alpha_1)n$ points above it (which corresponds to a dual point below the $\ceil{\alpha_1 n}$-level) also defines a halfplane in $U$.
A point in the intersection of $D_U$ and $C$ thus corresponds to a line on or above the $\ceil{\alpha_2 n}$-level and on or below the $\floor{(1-\alpha_1)n}$-level.
We find a bitangent to these two levels in $O(n \log^3 n)$ time using Lemma~\ref{lem:bitangent} (with $h = \mathbb{R}^2$).
The primal point of this line is~$p_1$; see the point indicated by the brown box in \fig{fig:levels}~(right).
We obtain~$p_2$ analogously.
\end{proof}

\begin{theorem}\label{thm:three_points_algorithm}
Three points as described in Theorem~\ref{thm:three_points_plane} can be computed in time $O(n \log^3 n)$.
\end{theorem}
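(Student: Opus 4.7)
The plan is to execute algorithmically the four-step construction in the proof of Theorem~\ref{thm:three_points_plane}, using only implicit queries on the dual line arrangement of~$P$: Matou\v{s}ek's tangent routine (Lemma~\ref{lem:matousek_tangent}), the bitangent routine (Lemma~\ref{lem:bitangent}), and where convenient the Langerman--Steiger optimization framework recalled after Lemma~\ref{lem:bitangent}. Neither the $\ceil{n/6}$-hull~$C$ nor the $\ceil{n/2}$-level is ever built explicitly.

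First I would locate a topmost vertex $p_1$ of~$C$. Via the duality of Section~\ref{sec:algorithms}, the upper chain of $\partial C$ is the image of the convex hull of the $\ceil{n/6}$-level of the dual arrangement, so $p_1$ in primal is the intersection of two primal lines dual to a specific pair of consecutive vertices of this hull. I would phrase this as an LP-type minimization of $-y$ over~$C$ and apply the Langerman--Steiger framework: a candidate primal point~$a$ is tested in $O(n \log n)$ by checking, via Lemma~\ref{lem:matousek_tangent}, whether $a^*$ stays between the $\ceil{n/6}$- and $\floor{5n/6}$-levels; if $a \notin C$ the violating tangent yields a separating witness halfplane, and otherwise a witness halfplane pushing $a$ northward is derived from the supporting tangent to the $\ceil{n/6}$-level at the highest obstruction. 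This yields $p_1$ in $O(n \log^3 n)$.

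Next, Steps 2 and 3 produce $h_1$ and $p_2$. The boundaries of elements of the family $K$ are supporting lines of the $\ceil{n/2}$-hull of~$P$, whose duals are points on the convex hull of the $\ceil{n/2}$-level. To find $\partial h_1$, I rotate counterclockwise starting at $p_1$ and take the first supporting line excluding $p_1$; equivalently, I look for the first dual point on the $\ceil{n/2}$-level convex hull lying on the appropriate side of $p_1^*$, which is a single tangent query through $p_1^*$ handled by Lemma~\ref{lem:matousek_tangent} in $O(n \log^2 n)$. Given $\partial h_1$, $p_2$ is the intersection of $\partial h_1$ with $\partial C$ furthest counterclockwise from $p_1$; this in turn is a tangent from $(\partial h_1)^*$ to the convex hull of the $\ceil{n/6}$-level, another invocation of Lemma~\ref{lem:matousek_tangent}, followed by the slight counterclockwise perturbation into $h_1$ prescribed by the proof.

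Finally, Step 4 places $p_3$ in the common intersection of the $c_i \in H$ that do not intersect $h_1$. By the one-dimensional Helly argument in the proof this intersection is non-empty; geometrically it is the subregion of~$C$ cut out by those $h_i$'s whose boundaries are tangents to the $\ceil{n/2}$-level on the side of $\partial h_1$ opposite to $p_1$. To produce a witness, I would invoke Lemma~\ref{lem:bitangent} with restriction halfplane~$h$ chosen in dual so as to cut the $\ceil{n/2}$-level on that far side; the returned bitangent line, read back in primal, is a point $p_3 \in C$ lying in every such $h_i$, computed in $O(n \log^3 n)$. The main obstacle is precisely this last step: one has to verify that "does not intersect $h_1$" encodes cleanly as a single restriction halfplane inside Lemma~\ref{lem:bitangent}, and that the inclusion-minimality defining $H$ does not force us to enumerate a superset of candidate $h_i$'s during the bitangent recursion. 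The total running time is $O(n \log^3 n)$.
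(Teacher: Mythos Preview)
Your proposal follows the geometric construction of Theorem~\ref{thm:three_points_plane} step by step, whereas the paper's algorithm exploits the flexibility noted at the end of that proof (``we actually have more flexibility in choosing $p_1$'') and takes a more direct route. In the paper, $p_1^*$ is chosen as the clockwise bitangent of the convex hulls $\gamma_b$ and $\gamma_t$ of the $\ceil{n/6}$- and $\floor{5n/6}$-levels, found by a single call to Lemma~\ref{lem:bitangent}. This line $p_1^*$ then cuts the middle level into two pieces $\mu_1$ and $\mu_2$, and $p_2^*$, $p_3^*$ are obtained as bitangents between an outer level hull and $\mu_1$ (respectively $\mu_2$)---again direct invocations of Lemma~\ref{lem:bitangent}, since $\mu_1$ is by definition the intersection of the middle level with the halfplane below $p_1^*$.

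The gap you yourself flag in Step~4 is real, and it is precisely what the paper's reorganisation sidesteps. Your predicate ``$c_i$ does not intersect $h_1$'' unpacks to $g_i \cap g_1 \cap C = \emptyset$ (writing $g_i$ for the defining halfplanes), which depends on the shape of $C$ and not merely on the normal direction of $g_i$; it does not in general correspond to restricting the middle level by a single halfplane in the dual, so Lemma~\ref{lem:bitangent} does not apply as stated. The same concern about inclusion-minimality in $H$ that you raise also affects your Step~2. By contrast, the paper's choice of $p_1$ replaces this awkward predicate by ``the part of the middle level on one side of the line $p_1^*$'', which is a halfplane restriction by construction. Incidentally, your Step~1 is heavier than needed: once $p_1$ is allowed to be any point on $\partial C$ dual to a bitangent of $\gamma_b$ and $\gamma_t$, no Langerman--Steiger search is required---a single call to Lemma~\ref{lem:bitangent} suffices.
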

\begin{proof}
Consider the dual line arrangement of the point set.
The points $p_1, p_2, p_3$ dualize to three lines $p_1^*, p_2^*, p_3^*$ that are between the $\ceil{\frac{n}{6}}$-level and the $\floor{\frac{5n}{6}}$-level of the arrangement s.t.\ every point on the middle level has at least one of these lines above it and one of these lines below it.
(We assume for simplicity that $n$ is odd and the \emph{middle level} is the $\floor{\frac{n}{2}}$-level of the arrangement;
if $n$ is even, one has to consider the points between the $\frac{n}{2}$-level and the $(\frac{n}{2}+1)$-level.)
Theorem~\ref{thm:three_points_plane} asserts that such lines exist, and its proof tells us that we can choose one of these lines to be an arbitrary tangent of one of the levels not intersecting the interior of the other one.
We denote by $\gamma_b$ and $\gamma_t$ the convex hull boundaries of the points on or below the $\ceil{\frac{n}{6}}$-level and of the points on or above the $\floor{\frac{5n}{6}}$-level, respectively.

We let $p^*_1$ be the clockwise bitangent of $\gamma_b$ and $\gamma_t$, which we can obtain in $O(n \log^3 n)$ time using Lemma~\ref{lem:bitangent}.
For simplicity of explanation, we also compute the counterclockwise bitangent~$\ell$.
(This step may be omitted in an actual implementation, but assuming it to be given facilitates the explanation and does not change the asymptotic running time.)

\begin{figure}
\centering
\includegraphics[page=6]{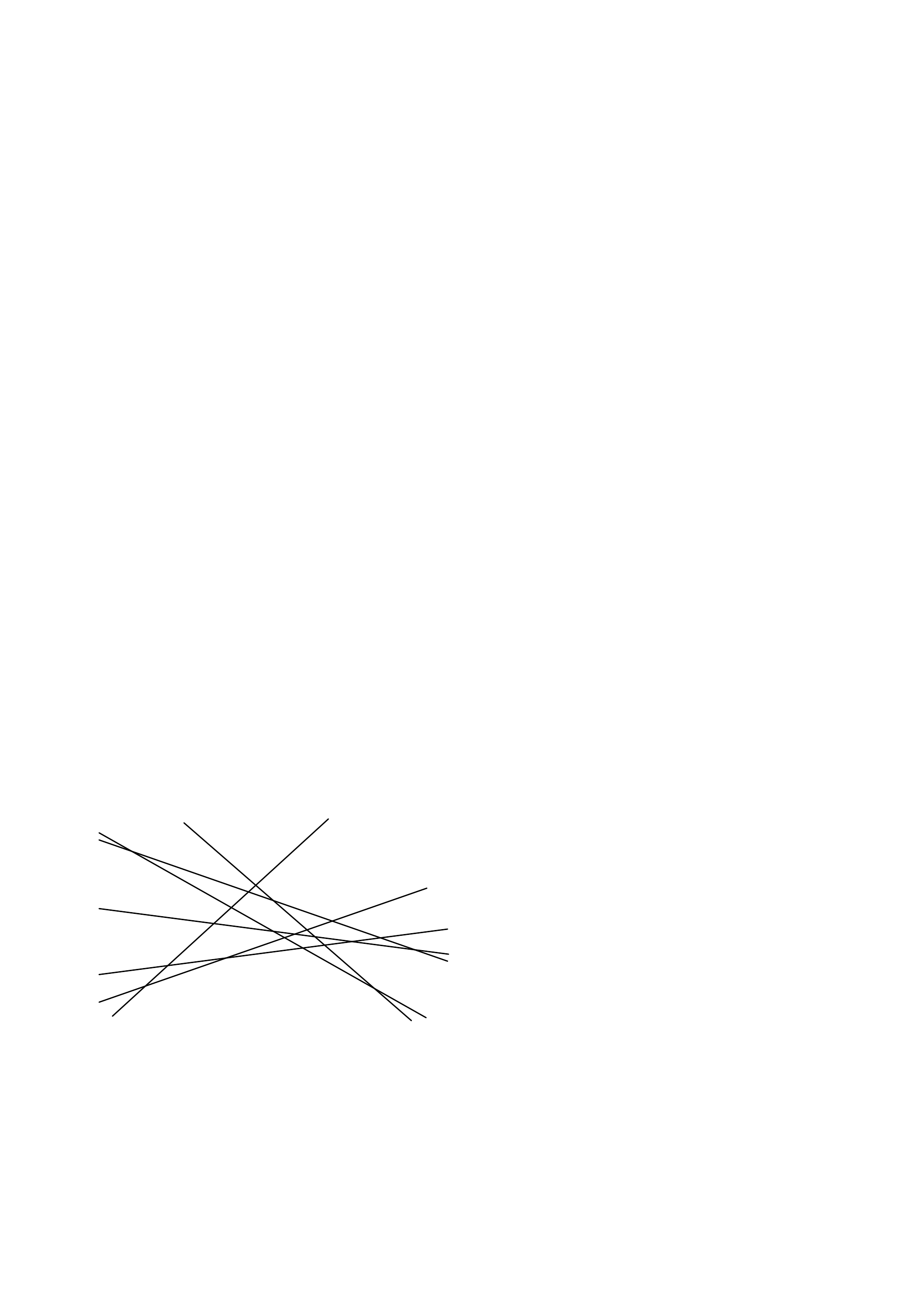}
\caption{An arrangement of seven lines with the $\ceil{\frac{n}{6}}$-level and $\floor{\frac{5n}{6}}$-level (blue) and the clockwise bitangent $p^*_1$ (red dashed) between them.
The green boxes indicate the two points defining the counterclockwise bitangent between the $\ceil{\frac{n}{6}}$-level and $\mu_1$ (brown).
}
\label{fig:bitangents}
\end{figure}

The line $p^*_1$ intersects the middle level of the arrangement.
Let $\mu_1$ be the parts of the middle level below $p^*_1$, and $\mu_2$ be the part above it.
Note that each of these parts may be disconnected.
Using Lemma~\ref{lem:bitangent}, we search for the counterclockwise bitangent between $\gamma_b$ (or, equivalently, the $\ceil{\frac{n}{6}}$-level) and $\mu_1$ (which is the intersection of the middle level with a halfspace defined by $p^*_1$) in $O(n \log^3 n)$ time.
If it exists, and its intersection point with $\gamma_b$ is between the intersections of $\gamma_b$ with $p^*_1$ and~$\ell$, we choose this line to be $p^*_2$.
Otherwise, we continue our search on $\gamma_t)$ in the same way (i.e., we look for the counterclockwise bitangent between $\gamma_t$ and $\mu_1$).
The line $p^*_3$ can be found in an analogous manner.
\end{proof}

\section{Conclusion}
We proposed a generalization of quantiles in higher dimensions based on a generalization of Tukey depth to multiple points.
Our bounds and algorithms seem merely being a first step in this direction and we can identify several interesting open problems.
Except for special cases of Theorem~\ref{thm:k_points_2d}, we do not believe that our bounds are tight and particularly expect significantly better bounds in higher dimensions.
Naturally, there are many other range spaces for which this problem could be considered, e.g., convex sets, like in~\cite{approximants}.

From an algorithmic point of view, the bottleneck for the running time of our approach is Lemma~\ref{lem:bitangent}.
The current methods result in $O(n \log^3 n)$ time.
While solutions to such kinds of problems can usually only be verified in $\Theta(n \log n)$ time (see, e.g.,~\cite{aloupis_lower,steiger}), 
a linear-time algorithm, like for centerpoints~\cite{jadhav}, is conceivable.
For arbitrarily many points, it seems tedious but doable to apply similar approaches as in the proof of Theorem~\ref{thm:two_points_algorithm}.
Is there a good bound on the running time independent of the size of $|Q|$?

\subparagraph{Acknowledgments.}
We thank Emo Welzl for initiating discussions on this topic, as well as anonymous reviewers for helpful comments.

\bibliographystyle{plainurl} %
\bibliography{refs}

\begin{thebibliography}{10}

\bibitem{aloupis}
Greg Aloupis.
\newblock Geometric measures of data depth.
\newblock In Regina~Y. Liu, Robert Serfling, and Diane~L. Souvaine, editors,
  {\em Data Depth: Robust Multivariate Analysis, Computational Geometry and
  Applications}, pages 147--158. {DIMACS/AMS}, 2003.

\bibitem{aloupis_lower}
Greg Aloupis, Carmen Cort{\'e}s, Francisco G{\'o}mez, Michael Soss, and
  Godfried Toussaint.
\newblock Lower bounds for computing statistical depth.
\newblock {\em Comput. Statist. Data Anal.}, 40(2):223--229, 2002.
\newblock \href {http://dx.doi.org/10.1016/S0167-9473(02)00032-4}
  {\path{doi:10.1016/S0167-9473(02)00032-4}}.

\bibitem{oja}
Greg Aloupis, Stefan Langerman, Michael~A. Soss, and Godfried~T. Toussaint.
\newblock Algorithms for bivariate medians and a {Fermat-Torricelli} problem
  for lines.
\newblock {\em Comput. Geom.}, 26(1):69--79, 2003.
\newblock \href {http://dx.doi.org/10.1016/S0925-7721(02)00173-6}
  {\path{doi:10.1016/S0925-7721(02)00173-6}}.

\bibitem{Aronov}
Boris Aronov, Franz Aurenhammer, Ferran Hurtado, Stefan Langerman, David
  Rappaport, Carlos Seara, and Shakhar Smorodinsky.
\newblock Small weak epsilon-nets.
\newblock {\em Comput. Geom.}, 42(5):455--462, 2009.
\newblock \href {http://dx.doi.org/10.1016/j.comgeo.2008.02.005}
  {\path{doi:10.1016/j.comgeo.2008.02.005}}.

\bibitem{small_weak_3d}
Maryam Babazadeh and Hamid Zarrabi{-}Zadeh.
\newblock Small weak epsilon-nets in three dimensions.
\newblock In {\em Proc. 18th Canadian Conference on Computational Geometry
  ({CCCG})}, 2006.

\bibitem{approximants}
Boris Bukh and Gabriel Nivasch.
\newblock One-sided epsilon-approximants.
\newblock In Martin Loebl, Jaroslav Ne{\v{s}}et{\v{r}}il, and Robin Thomas,
  editors, {\em A Journey Through Discrete Mathematics: A Tribute to
  Ji{\v{r}}{\'i} Matou{\v{s}}ek}, pages 343--356. Springer, 2017.
\newblock \href {http://dx.doi.org/10.1007/978-3-319-44479-6_12}
  {\path{doi:10.1007/978-3-319-44479-6_12}}.

\bibitem{chaudhuri}
Probal Chaudhuri.
\newblock On a geometric notion of quantiles for multivariate data.
\newblock {\em J. American Statist. Assoc.}, 91(434):862--872, 1996.

\bibitem{haussler_welzl}
David Haussler and Emo Welzl.
\newblock {$\varepsilon$}-nets and simplex range queries.
\newblock {\em Discrete Comput. Geom.}, 2:127--151, 1987.
\newblock \href {http://dx.doi.org/10.1007/BF02187876}
  {\path{doi:10.1007/BF02187876}}.

\bibitem{jadhav}
Shreesh Jadhav and Asish Mukhopadhyay.
\newblock Computing a centerpoint of a finite planar set of points in linear
  time.
\newblock {\em Discrete Comput. Geom.}, 12:291--312, 1994.
\newblock \href {http://dx.doi.org/10.1007/BF02574382}
  {\path{doi:10.1007/BF02574382}}.

\bibitem{langerman_steiger}
Stefan Langerman and William~L. Steiger.
\newblock Optimization in arrangements.
\newblock In {\em 20th Symp. on Theoretical Aspects of Computer Science
  ({STACS})}, volume 2607 of {\em LNCS}, pages 50--61, 2003.
\newblock \href {http://dx.doi.org/10.1007/3-540-36494-3_6}
  {\path{doi:10.1007/3-540-36494-3_6}}.

\bibitem{matousek_center}
Ji{\v{r}}{\'{\i}} Matou{\v{s}}ek.
\newblock Computing the center of planar point sets.
\newblock In Jacob~E. Goodman, Richard Pollack, and William Steiger, editors,
  {\em Discrete and Computational Geometry: Papers from the {DIMACS} Special
  Year}, volume~6 of {\em {DIMACS} Series in Discrete Mathematics and
  Theoretical Computer Science}, pages 221--230. {DIMACS/AMS}, 1990.

\bibitem{construction_of_epsilon_nets}
Ji{\v{r}}{\'{\i}} Matou{\v{s}}ek.
\newblock Construction of epsilon-nets.
\newblock {\em Discrete Comput. Geom.}, 5:427--448, 1990.
\newblock \href {http://dx.doi.org/10.1007/BF02187804}
  {\path{doi:10.1007/BF02187804}}.

\bibitem{Matousek}
Ji{\v{r}}{\'i} Matou{\v{s}}ek.
\newblock Tight upper bounds for the discrepancy of half-spaces.
\newblock {\em Discrete Comput. Geom.}, 13(3):593--601, 1995.
\newblock \href {http://dx.doi.org/10.1007/BF02574066}
  {\path{doi:10.1007/BF02574066}}.

\bibitem{MatWelzl}
Ji{\v{r}}{\'i} Matou{\v{s}}ek, Emo Welzl, and Lorenz Wernisch.
\newblock Discrepancy and approximations for bounded {VC}-dimension.
\newblock {\em Combinatorica}, 13(4):455--466, 1993.
\newblock \href {http://dx.doi.org/10.1007/BF01303517}
  {\path{doi:10.1007/BF01303517}}.

\bibitem{handbook}
Nabil Mustafa and Kasturi Varadarajan.
\newblock {Epsilon-approximations and epsilon-nets}.
\newblock In {\em {Handbook of Discrete and Computational Geometry}}. 2017.

\bibitem{Mustafa}
Nabil~H. Mustafa and Saurabh Ray.
\newblock An optimal extension of the centerpoint theorem.
\newblock {\em Comput. Geom.}, 42(6):505--510, 2009.
\newblock \href {http://dx.doi.org/10.1016/j.comgeo.2007.10.004}
  {\path{doi:10.1016/j.comgeo.2007.10.004}}.

\bibitem{steiger}
Sambuddha Roy and William Steiger.
\newblock Some combinatorial and algorithmic applications of the {Borsuk-Ulam}
  theorem.
\newblock {\em Graphs Combin.}, 23:331--341, 2007.
\newblock \href {http://dx.doi.org/10.1007/s00373-007-0716-1}
  {\path{doi:10.1007/s00373-007-0716-1}}.

\bibitem{shabbir}
Mudassir Shabbir.
\newblock {\em Some results in computational and combinatorial geometry}.
\newblock PhD thesis, Rutgers The State University of New Jersey, 2014.

\bibitem{tukey}
John~W. Tukey.
\newblock Mathematics and the picturing of data.
\newblock In {\em Proc. International Congress of Mathematicians}, pages
  523--531, 1975.

\end{thebibliography}

\end{document}